\def\B{\mathscr B}
\def\C{\mathbb C}
\def\d{\mathrm{d}}
\def\DD{\mathscr D}
\def\dom{\mathcal D}
\def\F{\mathcal F}
\def\g{\mathfrak g}
\def\H{\mathcal H}
\def\h{\mathfrak h}
\def\id{\mathrm{id}}
\def\K{\mathscr K}
\def\L{\mathscr L}
\def\ltwo{\mathop{\mathrm{L}^2}\nolimits}
\def\linf{\mathop{\mathrm{L}^\infty}\nolimits}
\def\N{\mathbb N}
\def\R{\mathbb R}
\def\S{\mathbb S}
\def\SS{\mathscr S}
\def\Tau{\mathcal T}
\def\U{\mathrm U}
\def\UU{\mathscr U}
\def\Z{\mathbb Z}
\def\e{\mathop{\mathrm{e}}\nolimits}
\def\im{\mathop{\mathrm{Im}}\nolimits}
\def\re{\mathop{\mathrm{Re}}\nolimits}
\def\tsum{\mathop{\textstyle\sum}\nolimits}
\DeclareMathOperator*{\slim}{s\hspace{0.1pt}-\hspace{0.1pt}lim}
\newtheorem{Theorem}{Theorem}[section]
\newtheorem{Remark}[Theorem]{Remark}
\newtheorem{Lemma}[Theorem]{Lemma}
\newtheorem{Proposition}[Theorem]{Proposition}
\newtheorem{Assumption}[Theorem]{Assumption}
\renewcommand{\theequation}{\arabic{section}.\arabic{equation}}
\begin{document}


\title{Decay estimates for unitary representations with applications to continuous-
and discrete-time models}

\author{S. Richard${}^{1}$\footnote{Supported by the grant \emph{Topological
invariants through scattering theory and noncommutative geometry} from Nagoya
University, and by JSPS Grant-in-Aid for scientific research C no 18K03328 \&
21K03292, and on leave of absence from Univ.~Lyon, Universit\'e Claude Bernard Lyon 1,
CNRS UMR 5208, Institut Camille Jordan, 43 blvd. du 11 novembre 1918, F-69622
Villeurbanne cedex, France.} and R. Tiedra de Aldecoa${}^{2}$\footnote{Partially
supported by the Chilean Fondecyt Grant 1210003.}}

\date{\small}
\maketitle
\vspace{-1cm}

\begin{quote}
\begin{itemize}
\item[1] Graduate school of mathematics, Nagoya University,
Chikusa-ku,  Nagoya 464-8602, Japan
\item[2] Facultad de Matem\'aticas, Pontificia Universidad Cat\'olica de Chile,\\
Av. Vicu\~na Mackenna 4860, Santiago, Chile
\item[] E-mail:  richard@math.nagoya-u.ac.jp, rtiedra@mat.uc.cl
\end{itemize}
\end{quote}


\begin{abstract}
We present a new technique to obtain polynomial decay estimates for the matrix
coefficients of unitary operators. Our approach, based on commutator methods, applies
to nets of unitary operators, unitary representations of topological groups, and
unitary operators given by the evolution group of a self-adjoint operator or by powers
of a unitary operator. Our results are illustrated with a wide range of examples in
quantum mechanics and dynamical systems, as for instance Schrödinger operators, Dirac
operators, quantum waveguides, horocycle flows, adjacency matrices, Jacobi matrices,
quantum walks or skew products.
\end{abstract}

\textbf{2010 Mathematics Subject Classification:} 22D10, 35Q40, 58J51, 81Q10.  

\smallskip

\textbf{Keywords:} Decay estimates, unitary representations, self-adjoint operators,
unitary operators.

\tableofcontents

\section{Introduction and main results}\label{section_intro}
\setcounter{equation}{0}

In recent papers
\cite{CT_2016,RT19,Sim_2018,Tie_2012,Tie_2015,Tie_2015_2,Tie_2017,Tie_2017_2,Tie_2018},
it has been shown that one can combine certain tools from dynamical systems (averaging
along the dynamics, ergodic theorems) and quantum mechanics (commutator methods) to
determine spectral properties of various classes of continuous- and discrete-time
models. In particular, a new criterion for strong mixing has been put into evidence in
\cite{Tie_2015,RT19}. Formally, it reads as follows: Let $(U_j)$ be a family of
unitary operators in a Hilbert space $\H$, let $\ell_j$ be positive numbers such that
$\ell_j\to\infty$, let $A$ be a self-adjoint operator in $\H$, set
$D_j:=\tfrac1{\ell_j}[A,U_j]U_j^{-1}$ and assume that the strong limit
$D:=\slim_j D_j$ exists. Then
\begin{equation}\label{eq_strong}
\lim_j|\langle\varphi,U_j\psi\rangle_\H|=0
\quad\hbox{for all $\varphi\in\ker(D)^\perp$ and $\psi\in\H$.}
\end{equation}
The operator $D$ can be interpreted as a topological degree of the map $j\mapsto U_j$.
Indeed, if one considers $[A,\;\!\cdot\;\!]$ as a derivation along the set $(U_j)$,
then $D$ corresponds to a renormalised, operator-valued, winding number for the map
$j\mapsto U_j$ (the logarithmic derivative $\frac{\d z}z$ in the usual definition of
winding number is replaced by the ``logaritmic derivative" $[A,U_j]U_j^{-1}$
associated to $[A,\;\!\cdot\;\!]$). See
\cite{Fra_2000,Fra_2004,GLL_1991,KL_2020,Kar_2016,Kar_2018,RT19,Tie_2015_2,Tie_2017,Tie_2018}
for more details and examples.

In concrete situations, one usually seeks to get an explicit rate of decay in
estimates like \eqref{eq_strong} in order to quantify the time propagation of the wave
functions. This problem is a very broad and active field of research, with numerous
results in a variety of setups. Decay of correlations, local decay estimates,
pointwise decay estimates, $\mathrm{L}^p$ decay estimates, Strichartz estimates,
microlocal estimates, propagation estimates, Morawetz estimates,... all are families
of results related to this problem. In this paper, we pursue the study initiated in
\cite{Tie_2015,RT19} and determine conditions that guarantee a polynomial rate of
decay in \eqref{eq_strong}. Our results are general, in the sense that they are stated
first for general nets of unitary operators, then for unitary representations of
topological groups, and finally for unitary operators given by the evolution group
$(\e^{-itH})_{t\in\R}$ of a self-adjoint operator $H$ or by the powers
$(U^n)_{n\in\Z}$ of a unitary operator $U$. Moreover, they apply to a wide range
of models both in quantum mechanics and dynamical systems. And finally, they
generalise to some extent the results of \cite{GLS_2016} (see also \cite{LS_2015})
where the authors use commutator methods to establish abstract pointwise decay
estimates for certain classes of self-adjoint operators. We refer to
\cite{BT_1999,DFV_2014,Las_1996} for related results about pointwise decay estimates
for self-adjoint and unitary operators.

Let us give a more detailed description of our results. In Section \ref{sec_general},
we introduce our framework and determine sufficient conditions that guarantee a
polynomial decay estimate
$$
|\langle\varphi,U_j\psi\rangle_\H|\le\tfrac1{\ell_j^n}\;\!c_{\varphi,\psi},
\quad\ell_j>0,~n\in \N^*,
$$
with $c_{\varphi,\psi}$ a constant depending on $\varphi,\psi$ (and $n$) but not on
$j$. A first set of conditions on $D_j$ and $\varphi,\psi$ leads to this estimate in
the case $n=1$, while more restrictive sets of conditions lead to this estimate for
any fixed $n\ge1$ (Theorem \ref{thm_decay}). In addition, when the unitary operators
$U_j$ are given by a unitary representation $\UU$ of a topological group $X$ and the
scalars $\ell_j$ are given by a proper length function on $X$, then we provide
conditions ensuring that $D$ commutes with $\UU$ and that $\UU$ has no nontrivial
finite-dimensional unitary subrepresentation in $\ker(D)^\perp$ (Proposition
\ref{prop_add}).

These general results are then applied in Sections \ref{sec_self} \& \ref{sec_unit} to
the case of a representation of $\R$ given by an evolution group
$(\e^{-itH})_{t\in\R}$ with self-adjoint generator $H$ and to the case of a
representation of $\Z$ given by the powers $(U^n)_{n\in\Z}$ of a unitary operator $U$.
In the former case, the operators $D_j$ can be written as Cesaro means
$$
D_t:=\tfrac1t\int_0^t\d\tau\,\e^{-i\tau H}(H+i)^{-1}[iH,A](H-i)^{-1}\e^{i\tau H},
\quad t>0,
$$
and our new results are the fact that $D$ is decomposable in the spectral
representation of $H$ and criteria for the continuity or absolute continuity of the
spectrum of $H$ in $\ker(D)^\perp$ (Lemma \ref{lemma_prop_H}). In the latter case, the
operators $D_j$ can be written as Cesaro means
$$
D_n=\tfrac1n\sum_{m=0}^{n-1}U^m([A,U]U^{-1})U^{-m},\quad n\in\N^*,
$$
and our new results are the fact that $D$ is decomposable in the spectral
representation of $U$ and criteria for the continuity or absolute continuity of the
spectrum of $U$ in $\ker(D)^\perp$ (Lemma \ref{lemma_prop_U}). Furthermore, in
Propositions \ref{prop_f(H)} \& \ref{prop_gamma(U)} we pay a special attention to the
particular cases $[iH,A]=f(H)$ and $[A,U]=\gamma(U)$ (with $f$ and $\gamma$ functions)
which are important for applications.

In Section \ref{sec_app}, we illustrate these abstract results with numerous examples.
For some of them, the decay estimates we obtain are known, while for others they are
new. A similar dichotomy holds for the spectral results we obtain when we deal with
representations admitting a self-adjoint or a unitary generator. However, the most
striking feature of our approach does not really rely on any new result for a given
example, but on its broad applicability. The whole variety of examples introduced in
Section \ref{sec_app} is conveniently covered with the same philosophy and toolkit.

Since Section \ref{sec_app} contains a detailed presentation of each example, we just
highlight here a few noticeable facts. First, we note that several important models of
quantum mechanics and dynamical systems are discussed in Section \ref{sec_app}. This
is for example the case of Schrödinger operators, Dirac operators, quantum waveguides,
horocycle flows, adjacency matrices, Jacobi matrices, quantum walks and skew
products. Next, as mentioned at the beginning of the introduction, the operator $D$
can sometimes be interpreted as a topological degree. This occurs for instance in the
case of skew products, see Section \ref{sec_skew}. In other instances, the operator
$D$ can be expressed in terms of the square of an asymptotic velocity operator (a
kinetic energy) for the unitary group under study. This occurs for instance in the
case of quantum walks on $\Z$, see Section \ref{sec_Z}. Finally, in Section
\ref{sec_reg} we discuss the case of the left regular representation of a
$\sigma$-compact locally compact Hausdorff group $X$ with left Haar measure $\mu$ and
proper length function $\ell$. In that case, we obtain for any net $(x_j)$ in $X$ with
$x_j\to\infty$ and suitable $\varphi,\psi\in\ltwo(X,\mu)$ the decay estimate
$$
|\langle\varphi,\UU(x_j)\psi\rangle_\H|\le\tfrac1{\ell(x_j)}\;\!c_{\varphi,\psi}.
$$
This estimate is similar to others in this paper, but with the interesting difference
that in general the representation $\UU$ doesn't have either a self-adjoint generator
or a unitary generator. It thus illustrates once again the fact that our approach is
general, and not only applicable to families of unitary operators admiting a
self-adjoint generator or unitary generator.

Finally, in Appendices \ref{sec_comm} \& \ref{sec_RAGE}, we collect some technical
results on commutators and regularity classes and on RAGE-type theorems for unitary
operators.

\bigskip
\noindent
{\bf Notations:} $\N:=\{0,1,2,\ldots\}$ is the set of natural numbers,
$\N^*:=\N\setminus\{0\}$, $\R_+:=(0,\infty)$, $\S^1$ the complex unit circle,
$\U(n)$ the group of $n\times n$ unitary matrices, and
$\langle\cdot\rangle:=\sqrt{1+|\cdot|^2}$. Given a Hilbert space $\H$, we write
$\|\cdot\|_\H$ for its norm, $\langle\cdot,\cdot\rangle_\H$ for its scalar product
(linear in the first argument), and $\U(\H)$ for the set of unitary operators on $\H$.
Given two Hilbert spaces $\H_1,\H_2$, we write $\B(\H_1,\H_2)$ (resp. $\K(\H_1,\H_2)$)
for the set of bounded (resp. compact) operators from $\H_1$ to $\H_2$. We also write
$\|\cdot\|_{\B(\H_1,\H_2)}$ for the norm of $\B(\H_1,\H_2)$, and use the shorthand
notations $\B(\H_1):=\B(\H_1,\H_1)$ and $\K(\H_1):=\K(\H_1,\H_1)$.

\section{Decay estimates for unitary representations}\label{sec_decay}
\setcounter{equation}{0}

\subsection{General unitary representations}\label{sec_general}

We start with a general theorem on decay estimates for the matrix coefficients of
unitary operators $U_j$ in a Hilbert space $\H$. In the proof, we use standard results
about commutators of operators recalled in Appendix \ref{sec_comm}.

\begin{Theorem}[Decay estimates]\label{thm_decay}
Let $(U_j)_{j\in J}$ be a net in $\U(\H)$, let $(\ell_j)_{j\in J}\subset[0,\infty)$
satisfy $\ell_j\to\infty$, assume there exists a self-adjoint operator $A$ in $\H$
such that $U_j\in C^1(A)$ for each $j\in J$, and suppose that the strong limit
$$
D:=\slim_jD_j\quad\hbox{with}\quad D_j:=\tfrac1{\ell_j}[A,U_j]U_j^{-1}
$$
exists. Then
\begin{enumerate}
\item[(a)] For each $\varphi=D\widetilde\varphi\in D\dom(A)$ and $\psi\in\dom(A)$ there
exists a constant $c_{\varphi,\psi}\ge0$ such that
$$
|\langle\varphi,U_j\psi\rangle_\H|
\le\|(D-D_j)\widetilde\varphi\|_\H\;\!\|\psi\|_\H
+\tfrac1{\ell_j}\;\!c_{\varphi,\psi},\quad\ell_j>0.
$$
In particular, $\lim_j\langle\xi,U_j\zeta\rangle_\H=0$ for all $\xi\in\ker(D)^\perp$
and $\zeta\in\H$.
\item[(b)] Assume that $D=D_j$ for all $j\in J$. Then for each $\varphi\in D\dom(A)$
and $\psi\in\dom(A)$ there exists a constant $c_{\varphi,\psi}\ge0$ such that
$$
|\langle\varphi,U_j\psi\rangle_\H|\le\tfrac1{\ell_j}\;\!c_{\varphi,\psi}
,\quad\ell_j>0.
$$
\item[(c)] Assume that $D=D_j$ for all $j\in J$, that $D\in C^1(A)$, and that
$[A,D]=DB$ with $B\in C^{(n-1)}(A)$ ($n\in\N^*$) and $[D,B]=0$. Then for each
$\varphi\in D^n\dom(A^n)$ and $\psi\in\dom(A^n)$ there exists a constant
$c_{\varphi,\psi}\ge0$ such that
$$
|\langle\varphi,U_j\psi\rangle_\H|\le\tfrac1{\ell_j^n}\;\!c_{\varphi,\psi},
\quad\ell_j>0.
$$
\end{enumerate}
\end{Theorem}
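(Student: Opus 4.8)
The plan is to treat the three parts in order, with (a) providing the core estimate, (b) following by inspection, and (c) by induction on $n$. Two facts drive everything. First, each $D_j$ is a bounded self-adjoint operator: bounded because $U_j\in C^1(A)$ forces $[A,U_j]$ to extend to an element of $\B(\H)$, and symmetric because $D_j=\tfrac1{\ell_j}(A-U_jAU_j^{-1})$ is formally self-adjoint; consequently the strong limit $D$ is again bounded and self-adjoint. Second, one has the elementary identity $D_jU_j=\tfrac1{\ell_j}[A,U_j]=\tfrac1{\ell_j}(AU_j-U_jA)$, obtained by cancelling $U_j^{-1}U_j$.

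For (a) I would write $\varphi=D\widetilde\varphi$ and split
\[
\langle\varphi,U_j\psi\rangle_\H
=\langle(D-D_j)\widetilde\varphi,U_j\psi\rangle_\H
+\langle\widetilde\varphi,D_jU_j\psi\rangle_\H .
\]
The first term is bounded by $\|(D-D_j)\widetilde\varphi\|_\H\,\|\psi\|_\H$ since $U_j$ is unitary. For the second, I insert the identity above; because $U_j\in C^1(A)$ preserves $\dom(A)$ and $\psi,\widetilde\varphi\in\dom(A)$, I may move $A$ onto the left entry to get $\tfrac1{\ell_j}\big(\langle A\widetilde\varphi,U_j\psi\rangle_\H-\langle\widetilde\varphi,U_jA\psi\rangle_\H\big)$, which is $\le\tfrac1{\ell_j}\big(\|A\widetilde\varphi\|_\H\|\psi\|_\H+\|\widetilde\varphi\|_\H\|A\psi\|_\H\big)$. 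This gives the stated bound with $c_{\varphi,\psi}=\|A\widetilde\varphi\|_\H\|\psi\|_\H+\|\widetilde\varphi\|_\H\|A\psi\|_\H$. The ``in particular'' claim then follows from a standard $\varepsilon/3$ density argument: the right-hand side tends to $0$ by strong convergence of $D_j$ to $D$ and $\ell_j\to\infty$, the set $D\dom(A)$ is dense in $\overline{\Ran(D)}=\ker(D)^\perp$, and $\|U_j\|_{\B(\H)}=1$ makes the approximation uniform in $j$.

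Part (b) is immediate: when $D=D_j$ the first term in the splitting vanishes identically, so (a) applied to $\varphi=D\widetilde\varphi\in D\dom(A)$ yields the bound. For (c) I would argue by induction on $n$, the case $n=1$ being (b). The algebraic heart of the step is the intertwining relation $AD^k=D^k(A+kB)$ on $\dom(A)$: from $D\in C^1(A)$ and $[A,D]=DB$ one gets $AD\eta=D(A+B)\eta$ for $\eta\in\dom(A)$, and iterating with $[D,B]=0$ (so that $B$ commutes through each factor $D$) produces the general $k$. Writing $\varphi=D^n\widetilde\varphi$ with $\widetilde\varphi\in\dom(A^n)$, I peel one factor using $D=D^*$ and $DU_j=\tfrac1{\ell_j}[A,U_j]$, namely $\langle D^n\widetilde\varphi,U_j\psi\rangle_\H=\tfrac1{\ell_j}\langle D^{n-1}\widetilde\varphi,(AU_j-U_jA)\psi\rangle_\H$, and then transport $A$ across $D^{n-1}$ by the intertwining relation to obtain
\[
\langle D^n\widetilde\varphi,U_j\psi\rangle_\H
=\tfrac1{\ell_j}\Big(\big\langle D^{n-1}(A+(n-1)B)\widetilde\varphi,U_j\psi\big\rangle_\H
-\big\langle D^{n-1}\widetilde\varphi,U_jA\psi\big\rangle_\H\Big).
\]
Each of the two inner products is a matrix coefficient of the same form with $D^{n-1}$ in place of $D^n$, whose vectors lie in $D^{n-1}\dom(A^{n-1})$ and $\dom(A^{n-1})$ respectively. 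The inductive hypothesis---whose structural assumptions hold at level $n-1$ because $B\in C^{n-1}(A)\subseteq C^{n-2}(A)$---bounds each by $\ell_j^{-(n-1)}$ times a $j$-independent constant, and the prefactor $\tfrac1{\ell_j}$ upgrades this to $\ell_j^{-n}$.

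I expect the main obstacle to be the domain bookkeeping that makes the intertwining relation and the reduction rigorous rather than merely formal: one must verify that $D$ maps each relevant $\dom(A^m)$ into itself, that $B$ (an element of $C^{n-1}(A)$, hence bounded and mapping $\dom(A^{n-1})$ into itself) sends $\dom(A^n)$ into $\dom(A^{n-1})$, and that after one reduction the vectors $(A+(n-1)B)\widetilde\varphi$ and $A\psi$ genuinely sit in $\dom(A^{n-1})$ so that the inductive statement applies. All of these are exactly controlled by the regularity hypotheses $U_j\in C^1(A)$, $D\in C^1(A)$ and $B\in C^{n-1}(A)$ together with the commutation relations $[A,D]=DB$ and $[D,B]=0$; once these domain facts are in place, the quantitative estimate follows from the recursion above.
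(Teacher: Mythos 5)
Your proposal is correct and follows essentially the same route as the paper: the same splitting $\langle\varphi,U_j\psi\rangle_\H=\langle(D-D_j)\widetilde\varphi,U_j\psi\rangle_\H+\langle D_j\widetilde\varphi,U_j\psi\rangle_\H$ with the commutator expansion giving $c_{\varphi,\psi}=\|A\widetilde\varphi\|_\H\|\psi\|_\H+\|\widetilde\varphi\|_\H\|A\psi\|_\H$ in (a), and the same induction in (c) built on the intertwining relation $AD^{n-1}=D^{n-1}(A+(n-1)B)$ derived from $[A,D]=DB$ and $[D,B]=0$. The only differences are cosmetic (you transport $D_j$ to the right entry via self-adjointness and name the $\dom(A^n)$ vector directly), and your domain bookkeeping matches the paper's.
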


\begin{proof}
(a) Take $\varphi=D\widetilde\varphi\in D\dom(A)$, $\psi\in\dom(A)$, and $j\in J$ such
that $\ell_j>0$. Then we have
\begin{align*}
|\langle\varphi,U_j\psi\rangle_\H|
&=\big|\langle(D-D_j)\widetilde\varphi,U_j\psi\rangle_\H
+\langle D_j\widetilde\varphi,U_j\psi\rangle_\H\big|\\
&\le\|(D-D_j)\widetilde\varphi\|_\H\!\;\|\psi\|_\H+\tfrac1{\ell_j}
\big|\langle[A,U_j]U_j^{-1}\widetilde\varphi,U_j\psi\rangle_\H\big|\\
&\le\|(D-D_j)\widetilde\varphi\|_\H\;\!\|\psi\|_\H
+\tfrac1{\ell_j}\big|\langle A\widetilde\varphi,U_j\psi\rangle_\H\big|
+\tfrac1{\ell_j}\big|\langle\widetilde\varphi,U_jA\psi\rangle_\H\big|\\
&\le\|(D-D_j)\widetilde\varphi\|_\H\;\!\|\psi\|_\H
+\tfrac1{\ell_j}\;\!c_{\varphi,\psi}
\end{align*}
with
$
c_{\varphi,\psi}
:=\|A\widetilde\varphi\|_\H\|\psi\|_\H+\|\widetilde\varphi\|_\H\|A\psi\|_\H
$.
This proves the first part of the claim. Since $D=\slim_jD_j$ and $\ell_j\to\infty$,
we infer that $\lim_j\langle\varphi,U_j\psi\rangle=0$, and thus the second part of the
claim follows by the density of $D\dom(A)$ in $\overline{D\H}=\ker(D)^\perp$ and the
density of $\dom(A)$ in $\H$.

(b) The claim is a direct consequence of point (a) in the case $D=D_j$ for all
$j\in J$.

(c) We prove the claim by induction on $n\in\N^*$. For $n=1$, the claim is true due to
point (b). For $n-1\ge1$, we make the induction hypothesis that the claim is true. For
$n$, we take $\varphi\in D^n\dom(A^n)$, $\psi\in\dom(A^n)$ and $j\in J$ such that
$\ell_j>0$. Then, since $D=D_j$ and $\varphi=D\widetilde\varphi$ with
$\widetilde\varphi\in D^{n-1}\dom(A^n)$, we get
\begin{equation}\label{eq_one}
\langle\varphi,U_j\psi\rangle_\H
=\langle D_j\widetilde\varphi,U_j\psi\rangle_\H
=\tfrac1{\ell_j}\langle A\widetilde\varphi,U_j\psi\rangle_\H
-\tfrac1{\ell_j}\langle\widetilde\varphi,U_jA\psi\rangle_\H.
\end{equation}
The induction hypothesis applies to the second term in \eqref{eq_one} since
$\widetilde\varphi\in D^{n-1}\dom(A^n)\subset D^{n-1}\dom(A^{n-1})$ and
$A\psi\in\dom(A^{n-1})$. So there exists $c_{\widetilde\varphi,A\psi}\ge0$ such that
\begin{equation}\label{eq_two}
\big|\langle\widetilde\varphi,U_jA\psi\rangle_\H\big|
\le\tfrac1{\ell_j^{n-1}}\;\!c_{\widetilde\varphi,A\psi}.
\end{equation}
For the first term in \eqref{eq_one}, we have
$\widetilde\varphi=D^{n-1}\tilde{\tilde\varphi}$ with
$\tilde{\tilde\varphi}\in\dom(A^n)$. So, using the relations $[A,D]=DB$ and $[D,B]=0$,
we get that
\begin{align*}
A\widetilde\varphi
&=\big(D^{n-1}A+[A,D^{n-1}]\big)\tilde{\tilde\varphi}\\
&=\big(D^{n-1}A+\tsum_{m=0}^{n-2}D^{n-2-m}[A,D]D^m\big)\tilde{\tilde\varphi}\\
&=D^{n-1}(A+(n-1)B)\tilde{\tilde\varphi}
\end{align*}
with $(A+(n-1)B)\tilde{\tilde\varphi}\in\dom(A^{n-1})$ due to the inclusions
$\tilde{\tilde\varphi}\in\dom(A^n)$ and $B\in C^{(n-1)}(A)$. Therefore
$A\widetilde\varphi\in D^{n-1}\dom(A^{n-1})$ and
$\psi\in\dom(A^n)\subset\dom(A^{n-1})$, and we infer from the induction hypothesis
that there exists $c_{A\widetilde\varphi,\psi}\ge0$ such that
\begin{equation}\label{eq_three}
\big|\langle A\widetilde\varphi,U_j\psi\rangle_\H\big|
\le\tfrac1{\ell_j^{n-1}}\;\!c_{A\widetilde\varphi,\psi}.
\end{equation}
Finally, combining \eqref{eq_one}, \eqref{eq_two} and \eqref{eq_three}, we obtain that
$$
|\langle\varphi,U_j\psi\rangle_\H|
\le\tfrac1{\ell_j}\tfrac1{\ell_j^{n-1}}\;\!c_{A\widetilde\varphi,\psi}
+\tfrac1{\ell_j}\tfrac1{\ell_j^{n-1}}\;\!c_{\widetilde\varphi,A\psi}
=\tfrac1{\ell_j^n}\;\!c_{\varphi,\psi}
$$
with $c_{\varphi,\psi}:=c_{A\widetilde\varphi,\psi}+c_{\widetilde\varphi,A\psi}$.
\end{proof}

\begin{Remark}
(a) If the operators $U_j$ are given by a unitary representation, then the property of
Theorem \ref{thm_decay}(a) $\lim_j\langle\xi,U_j\zeta\rangle_\H=0$ for
$\xi\in\ker(D)^\perp$ and $\zeta\in\H$ amounts to a strong mixing property of the
unitary representation in $\ker(D)^\perp$. See \cite{RT19} for more information on
this point.

(b) The set $D^n\dom(A^n)$ in Theorem \ref{thm_decay}(c) is always dense in
$\ker(D)^\perp$, independently of the value of $n\in\N^*$. Indeed, since $\dom(A^n)$
is dense in $\H$, we have that $D^n\dom(A^n)$ is dense in
$\overline{D^n\H}=\ker(D^n)^\perp$. But $D$ is self-adjoint. So $\ker(D^n)=\ker(D)$,
and thus $D^n\dom(A^n)$ is dense in $\ker(D)^\perp$.

(c) Sometimes the unitary operators $U_j$ are given by the evolution group of a
self-adjoint operator $H$, namely, $(U_j)_{j\in J}=(\e^{-itH})_{t>0}$. In this
situation, a convenient operator $A$ in Theorem \ref{thm_decay} is often of the form
$\widetilde A=(H+i)^{-1}A(H-i)^{-1}$ with $A$ some self-adjoint operator such that
$(H-i)^{-1}\in C^1(A)$. In that case, the estimates of Theorem \ref{thm_decay} hold
for vectors $\varphi\in D^n\dom((\widetilde A)^n)$ and
$\psi\in\dom((\widetilde A)^n)$. However, since $A$ is simpler than $\widetilde A$ and
since $\dom(A^n)\subset\dom((\widetilde A)^n)$, in concrete examples we will only
present the estimates for vectors $\varphi\in D^n\dom(A^n)$ and $\psi\in\dom(A^n)$ for
the sake of simplicity (see Sections \ref{sec_Schrod}, \ref{sec_guides},
\ref{sec_horo}, \ref{sec_graphs} and \ref{sec_Jacobi}).
\end{Remark}

In the sequel, we assume that the unitary operators $U_j$ are given by a unitary
representation $\UU:X\to\U(\H)$ of a topological group $X$. We also assume that the
scalars $\ell_j$ are given by a proper length function on $X$, that is, a function
$\ell:X\to[0,\infty)$ satisfying the following properties (with $e$ the identity of
$X$):
\begin{enumerate}
\item[(L1)] $\ell(e)=0$,
\item[(L2)] $\ell(x^{-1})=\ell(x)$ for all $x\in X$,
\item[(L3)] $\ell(xy)\le\ell(x)+\ell(y)$ for all $x,y\in X$,
\item[(L4)] if $K\subset[0,\infty)$ is compact, then $\ell^{-1}(K)\subset X$ is
relatively compact.
\end{enumerate}
Finally, we recall that a net $(x_j)_{j\in J}$ in a topological space $X$ diverges to
infinity, with notation $x_j\to\infty$, if $(x_j)_{j\in J}$ has no limit point in $X$.
This implies that for each compact set $K\subset X$, there exists $j_K\in J$ such that
$x_j\notin K$ for $j\ge j_K$. In particular, $X$ is not compact.

In this situation, the existence of the strong limit $D$ leads to additional
properties of the unitary operators given by $\UU$. Namely, $\UU$ has no nontrivial
finite-dimensional unitary subrepresentation in $\ker(D)^\perp$, and the operator $D$
commutes with $\UU:$

\begin{Proposition}\label{prop_add}
Let $X$ be a topological group equipped with a proper length function $\ell$, let
$\UU:X\to\U(\H)$ be a unitary representation of $X$, let $(x_j)_{j\in J}$ be a net in
$X$ with $x_j\to\infty$, assume there exists a self-adjoint operator $A$ in $\H$ such
that $\UU(x_j)\in C^1(A)$ for each $j\in J$, and suppose that the strong limit
$$
D:=\slim_jD_j
\quad\hbox{with}\quad
D_j:=\tfrac1{\ell(x_j)}[A,\UU(x_j)]\UU(x_j)^{-1}
$$
exists. Then
\begin{enumerate}
\item[(a)] $\UU$ has no nontrivial finite-dimensional unitary subrepresentation in
$\ker(D)^\perp$.
\item[(b)] Let $x\in X$, assume that $\UU(x)\in C^1(A)$, and suppose that the strong
limit
$$
\widetilde D:=\slim_j\widetilde D_j
\quad\hbox{with}\quad
\widetilde D_j:=\tfrac1{\ell(x^{-1}x_j)}[A,\UU(x^{-1}x_j)]\UU(x^{-1}x_j)^{-1}
$$
exists and satisfies $D=\widetilde D$. Then $[D,\UU(x)]=0$.
\end{enumerate}
\end{Proposition}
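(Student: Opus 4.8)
The plan is to prove the two parts independently, using Theorem~\ref{thm_decay}(a) for (a) and the cocycle structure of the logarithmic derivative $g\mapsto[A,\UU(g)]\UU(g)^{-1}$ for (b). A fact needed in both parts is that $\ell(x_j)\to\infty$: since the net $(x_j)$ has no limit point in $X$ while property (L4) forces $\ell^{-1}([0,M])$ to be relatively compact for each $M\ge0$, no subnet of $(\ell(x_j))$ can remain bounded. This in particular verifies the hypothesis $\ell_j\to\infty$ of Theorem~\ref{thm_decay}, whose conclusion (a) I will quote freely below with $U_j=\UU(x_j)$ and $\ell_j=\ell(x_j)$.

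For part (a) I would argue by contradiction. Suppose there is a nontrivial finite-dimensional subspace $\mathcal V\subseteq\ker(D)^\perp$ invariant under $\UU$, of dimension $d\ge1$, with orthonormal basis $e_1,\dots,e_d$. Since each $\UU(x_j)$ is unitary and leaves $\mathcal V$ invariant, one has $\UU(x_j)e_l\in\mathcal V$ and hence $\sum_{k}|\langle e_k,\UU(x_j)e_l\rangle_\H|^2=\|\UU(x_j)e_l\|_\H^2=1$ for each $l$, so that $\sum_{k,l}|\langle e_k,\UU(x_j)e_l\rangle_\H|^2=d$ for every $j$. On the other hand, Theorem~\ref{thm_decay}(a) gives $\langle e_k,\UU(x_j)e_l\rangle_\H\to0$ for each fixed pair $(k,l)$, because $e_k\in\ker(D)^\perp$. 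As the sum is finite, it tends to $0$, contradicting its constant value $d\ge1$; thus no such $\mathcal V$ exists.

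For part (b) the engine is the cocycle identity
$$
\Phi(gh)=\Phi(g)+\UU(g)\Phi(h)\UU(g)^{-1},
\qquad\Phi(g):=[A,\UU(g)]\UU(g)^{-1},
$$
which follows from the Leibniz rule $[A,ST]=[A,S]T+S[A,T]$ for $C^1(A)$ operators (Appendix~\ref{sec_comm}). Writing $x_j=x\,(x^{-1}x_j)$ and abbreviating $y_j:=x^{-1}x_j$, this yields $\Phi(x_j)=\Phi(x)+\UU(x)\Phi(y_j)\UU(x)^{-1}$, and dividing by $\ell(x_j)$ gives the decomposition
$$
D_j=\tfrac1{\ell(x_j)}\Phi(x)
+\tfrac{\ell(y_j)}{\ell(x_j)}\,\UU(x)\,\widetilde D_j\,\UU(x)^{-1}.
$$
I would then pass to the strong limit term by term: the first summand vanishes in norm since $\Phi(x)$ is bounded (as $\UU(x)\in C^1(A)$) while $\ell(x_j)\to\infty$; the scalar ratio satisfies $\ell(y_j)/\ell(x_j)\to1$ because $|\ell(x_j)-\ell(y_j)|\le\ell(x)$ by (L2)--(L3); and $\UU(x)\widetilde D_j\UU(x)^{-1}\to\UU(x)D\UU(x)^{-1}$ strongly, using $\widetilde D=D$ and the boundedness of $\UU(x)$. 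Since the product of a bounded convergent scalar net with a strongly convergent operator net converges strongly to the product of the limits, the left-hand side converges to $D$ and I obtain $D=\UU(x)D\UU(x)^{-1}$, i.e. $[D,\UU(x)]=0$.

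The genuinely routine ingredients here are the Leibniz rule and the behaviour of scalar-times-strong limits. The step requiring care is the correct derivation and bookkeeping of the cocycle identity---in particular the conjugation by $\UU(x)$ and the need to confirm that every factor appearing ($\UU(x)^{-1}=\UU(x)^*$, and $\UU(y_j)=\UU(x)^{-1}\UU(x_j)$) stays in $C^1(A)$, so that all commutators are genuine bounded operators and the manipulations are legitimate. Once the identity is in place, the limit is forced and the commutation relation follows immediately.
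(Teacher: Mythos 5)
Your proof is correct and follows essentially the same route as the paper: part (b) is the paper's computation repackaged as the cocycle identity for $g\mapsto[A,\UU(g)]\UU(g)^{-1}$, with the same ingredients (the Leibniz rule, the bound $|\ell(x_j)-\ell(x^{-1}x_j)|\le\ell(x)$, the observation that $\ell(x_j)\to\infty$, and the $C^1(A)$ bookkeeping for $\UU(x)^{-1}\UU(x_j)$). The only deviation is in part (a), where the paper cites the fact that matrix coefficients of nontrivial finite-dimensional unitary representations do not vanish at infinity, whereas you prove that fact inline via the Parseval identity $\sum_{k,l}|\langle e_k,\UU(x_j)e_l\rangle_\H|^2=d$ — a self-contained and equally valid substitute for the citation.
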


\begin{proof}
(a) The claim follows from Theorem \ref{thm_decay}(a) and the fact that matrix
coefficients of finite-dimensional unitary representations of a group do not vanish at
infinity (see for instance \cite[Rem.~2.15(iii)]{BM00}).

(b) First, note that the commutator $[A,\UU(x^{-1}x_j)]$ in the expression for
$\widetilde D$ is well-defined for each $j\in J$ because
$\UU(x^{-1}x_j)=\UU(x)^{-1}\UU(x_j)$ with $\UU(x)\in C^1(A)$ and $\UU(x_j)\in C^1(A)$.
Next, we have
\begin{align*}
D\UU(x)
&=\slim_j\tfrac1{\ell(x_j)}[A,\UU(x)\UU(x^{-1}x_j)]\UU(x^{-1}x_j)^{-1}\\
&=\slim_j\tfrac1{\ell(x_j)}[A,\UU(x)]\UU(x^{-1}x_j)\UU(x^{-1}x_j)^{-1}\\
&\quad+\UU(x)\cdot\slim_j\tfrac1{\ell(x_j)}[A,\UU(x^{-1}x_j)]\UU(x^{-1}x_j)^{-1}\\
&=\lim_j\tfrac1{\ell(x_j)}\cdot[A,\UU(x)]
+\UU(x)\cdot\slim_j\tfrac1{\ell(x_j)}[A,\UU(x^{-1}x_j)]\UU(x^{-1}x_j)^{-1}
\end{align*}
with the first term vanishing because $\lim_j\tfrac1{\ell(x_j)}=0$ and with the second
term satisfying
\begin{align*}
&\UU(x)\cdot\slim_j\tfrac1{\ell(x_j)}[A,\UU(x^{-1}x_j)]\UU(x^{-1}x_j)^{-1}\\
&=\UU(x)\widetilde D+\UU(x)\cdot\slim_j\left(\tfrac1{\ell(x_j)}
-\tfrac1{\ell(x^{-1}x_j)}\right)[A,\UU(x^{-1}x_j)]\UU(x^{-1}x_j)^{-1}.
\end{align*}
Thus, to conclude the proof, it is sufficient to show that
\begin{equation}\label{eq_zero}
\slim_j\left(\tfrac1{\ell(x_j)}-\tfrac1{\ell(x^{-1}x_j)}\right)
[A,\UU(x^{-1}x_j)]\UU(x^{-1}x_j)^{-1}=0.
\end{equation}
Let $\varphi\in\H$. Then we have
\begin{align*}
&\lim_j\left\|\left(\tfrac1{\ell(x_j)}-\tfrac1{\ell(x^{-1}x_j)}\right)
[A,\UU(x^{-1}x_j)]\UU(x^{-1}x_j)^{-1}\varphi\right\|_\H\\
&=\lim_j\left|\tfrac{\ell(x^{-1}x_j)-\ell(x_j)}{\ell(x_j)}\right|
\cdot\left\|\tfrac1{\ell(x^{-1}x_j)}[A,\UU(x^{-1}x_j)]\UU(x^{-1}x_j)^{-1}
\varphi\right\|_\H\\
&\le\big(\|\widetilde D\varphi\|_\H+1\big)\lim_j
\left|\tfrac{\ell(x^{-1}x_j)-\ell(x_j)}{\ell(x_j)}\right|.
\end{align*}
Since $|\ell(x^{-1}x_j)-\ell(x_j)|\le\ell(x)$ due to the triangle inequality for
$\ell$, we infer that
$$
\lim_j\left\|\left(\tfrac1{\ell(x_j)}-\tfrac1{\ell(x^{-1}x_j)}\right)
[A,\UU(x^{-1}x_j)]\UU(x^{-1}x_j)^{-1}\varphi\right\|_\H
\le\big(\|\widetilde D\varphi\|_\H+1\big)\ell(x)\lim_j\tfrac1{\ell(x_j)}=0,
$$
which proves \eqref{eq_zero}.
\end{proof}

\subsection{Unitary representations with self-adjoint generator}\label{sec_self}

In this section, we consider the important case where the representation is a strongly
continuous unitary representation $\UU:\R\to\U(\H)$ of the additive group $\R$. In
such a case, Stone's theorem implies the existence of a self-adjoint operator $H$ in
$\H$ such that $\UU(t)=\e^{-itH}$ for each $t\in\R$. One could also consider the
higher-dimensional case of a strongly continuous unitary representation of the
additive group $\R^d$ for $d\ge1$. But we refrained from doing it for the sake of
simplicity.

We use the notation $P_{\rm p}(H)$ (resp. $P_{\rm c}(H)$, $P_{\rm ac}(H)$) for the
projection onto the pure point (resp. continuous, absolutely continuous) subspace
$\H_{\rm p}(H)$ (resp. $\H_{\rm c}(H)$, $\H_{\rm ac}(H)$) of $H$, $E^H(\cdot)$ for the
spectral projections of $H$, and $\chi_{\mathcal B}$ for the characteristic function
of a Borel set $\mathcal B\subset\R$.

\begin{Lemma}[Properties of $D$]\label{lemma_prop_H}
Let $H$ and $A$ be self-adjoint operators in a Hilbert space $\H$ with
$(H-i)^{-1}\in C^1(A)$, and let
$$
D_t:=\tfrac1t\int_0^t\d\tau\,\e^{-i\tau H}(H+i)^{-1}[iH,A](H-i)^{-1}\e^{i\tau H},
\quad t>0.
$$
Then
\begin{enumerate}
\item[(a)] $\slim_{t\to\infty}D_tP_{\rm p}(H)=0$.
\item[(b)] If there exists $B\in\B(\H)$ such that $(H+i)^{-1}([iH,A]-B)(H-i)^{-1}\in\K(\H)$
and
$$
\slim_{t\to\infty}\tfrac1t\int_0^t\d\tau\,
\e^{-i\tau H}(H+i)^{-1}B(H-i)^{-1}\e^{i\tau H}P_{\rm c}(H)~~\hbox{exists,}
$$
then
$$
\slim_{n\to\infty}D_t
=\slim_{t\to\infty}\tfrac1t\int_0^t\d\tau\,
\e^{-i\tau H}(H+i)^{-1}B(H-i)^{-1}\e^{i\tau H}P_{\rm c}(H).
$$
\end{enumerate}
Furthermore, if $D:=\slim_{t\to\infty}D_t$ exists, then
\begin{enumerate}
\item[(c)] $[D,\e^{isH}]=0$ for all $s\in\R$. In particular, $D$ is decomposable in the
spectral representation of $H$.
\item[(d)] $D=DP_{\rm c}(H)$. In particular, $H|_{\ker(D)^\perp}$ has purely
continuous spectrum.
\item[(e)] If $D\dom(A)\subset\dom(A)$ and
$\int_1^\infty\d t\;\!\|(D-D_t)\varphi\|^2_\H<\infty$ for all $\varphi\in\dom(A)$,
then $H|_{\ker(D)^\perp}$ has purely a.c. spectrum.
\end{enumerate}
\end{Lemma}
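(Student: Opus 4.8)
The plan is to prove the stronger inclusion $\ker(D)^\perp\subseteq\H_{\rm ac}(H)$; since point (d) already shows that $H|_{\ker(D)^\perp}$ has purely continuous spectrum (no eigenvalues), this inclusion is exactly what upgrades ``purely continuous'' to ``purely absolutely continuous''. I would obtain it from the Fourier-transform criterion for absolute continuity: for $\eta\in\ker(D)^\perp$, the function $t\mapsto\langle\zeta,\e^{-itH}\eta\rangle_\H$ is the Fourier transform of the complex spectral measure $\langle\zeta,E^H(\cdot)\eta\rangle_\H$, so if it lies in $\ltwo(\R)$ for every $\zeta$ in a dense subset of $\ker(D)^\perp$, that measure is absolutely continuous for all such $\zeta$, and letting $\zeta$ tend to the singular part of $\eta$ (which lies in $\ker(D)^\perp$ because $[D,\e^{isH}]=0$ makes $\ker(D)^\perp$ reduce $H$) forces this singular part to vanish. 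As $D$ is bounded and self-adjoint with $D\dom(A)$ dense in $\ker(D)^\perp$, it then suffices to show that $t\mapsto\langle D\varphi_2,\e^{-itH}D\varphi_1\rangle_\H\in\ltwo(\R)$ for all $\varphi_1,\varphi_2\in\dom(A)$.

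The computational input is the telescoping identity $D_t=\tfrac1t\big(\widetilde A-\e^{-itH}\widetilde A\,\e^{itH}\big)$ on $\dom(A)$, with $\widetilde A:=(H+i)^{-1}A(H-i)^{-1}$, which follows by integrating the relation $\tfrac{\d}{\d\tau}\big(\e^{-i\tau H}\widetilde A\,\e^{i\tau H}\big)=-\e^{-i\tau H}(H+i)^{-1}[iH,A](H-i)^{-1}\e^{i\tau H}$ from $0$ to $t$, the integrand being precisely the one defining $D_t$. Using this and inserting $D=D_t+(D-D_t)$ in the left slot, I would split, for $t\ge1$,
\begin{equation*}
\langle D\varphi_2,\e^{-itH}D\varphi_1\rangle_\H
=\langle(D-D_t)\varphi_2,\e^{-itH}D\varphi_1\rangle_\H
+\tfrac1t\langle\widetilde A\varphi_2,\e^{-itH}D\varphi_1\rangle_\H
-\tfrac1t\langle\widetilde A\,\e^{itH}\varphi_2,D\varphi_1\rangle_\H.
\end{equation*}
The first term is bounded by $\|(D-D_t)\varphi_2\|_\H\,\|D\varphi_1\|_\H$, hence square-integrable over $[1,\infty)$ by the second hypothesis, and the second term is $O(1/t)$ because $\|\widetilde A\varphi_2\|_\H<\infty$.

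The delicate term is the third one. Here the naive estimate fails: since $\tfrac1t\e^{-itH}\widetilde A\,\e^{itH}\varphi_2\to-D\varphi_2$, the norm $\|\widetilde A\,\e^{itH}\varphi_2\|_\H$ grows linearly in $t$, so bounding by this norm only gives $O(1)$. The resolution, and the point where the first hypothesis $D\dom(A)\subseteq\dom(A)$ enters, is to transfer $\widetilde A$ onto the fixed vector $D\varphi_1$: writing $\langle\widetilde A\,\e^{itH}\varphi_2,D\varphi_1\rangle_\H=\langle\e^{itH}\varphi_2,\widetilde A^*D\varphi_1\rangle_\H$ with $\widetilde A^*=(H-i)^{-1}A(H+i)^{-1}$, the vector $\widetilde A^*D\varphi_1=(H-i)^{-1}A(H+i)^{-1}D\varphi_1$ is $t$-independent and finite precisely because $(H+i)^{-1}D\varphi_1\in\dom(A)$ (as $D\varphi_1\in\dom(A)$ and $(H+i)^{-1}\dom(A)\subseteq\dom(A)$ by $(H-i)^{-1}\in C^1(A)$). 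Thus the third term is $O(1/t)$ as well. For $t\le-1$ I would reduce to the previous case via $\langle D\varphi_2,\e^{-itH}D\varphi_1\rangle_\H=\overline{\langle D\varphi_1,\e^{itH}D\varphi_2\rangle_\H}$, and on $[-1,1]$ the integrand is bounded; altogether $t\mapsto\langle D\varphi_2,\e^{-itH}D\varphi_1\rangle_\H\in\ltwo(\R)$. Feeding this into the criterion gives $D\varphi_1\in\H_{\rm ac}(H)$, and density of $D\dom(A)$ together with closedness of $\H_{\rm ac}(H)$ yields $\ker(D)^\perp\subseteq\H_{\rm ac}(H)$.

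I expect the difficulty to lie in two technical points rather than in the overall scheme. The first is justifying the telescoping identity rigorously: $\widetilde A$ is unbounded, so both the differentiation under the integral and the invariance $\e^{itH}\dom(A)\subseteq\dom(A)$ must be controlled using the $C^1(A)$-regularity of $(H-i)^{-1}$ and the standard commutator facts recalled in the appendix. The second, and the genuine crux, is the handling of the non-vanishing third term: it is tempting but fatal to estimate it through $\|\widetilde A\,\e^{itH}\varphi_2\|_\H$, and the argument succeeds only through the integration by parts that moves $\widetilde A$ onto $D\varphi_1\in\dom(A)$. This is exactly why both hypotheses of point (e) are indispensable and why the conclusion is absolute, not merely singular, continuity.
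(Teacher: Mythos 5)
Your proposal addresses only part (e) of the Lemma; parts (a)--(d) are taken for granted (you explicitly lean on (c) and (d)) and are never proved. If the whole Lemma is the target, that is the main gap: the paper proves (a) via von Neumann's ergodic theorem and the virial theorem, (b) via the RAGE theorem, (c) via Proposition \ref{prop_add}(b) applied to the auxiliary operator $\widetilde A=(H+i)^{-1}A(H-i)^{-1}$, and (d) as a consequence of (a); none of this appears in your write-up.

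For part (e) itself, your argument is correct and is in substance identical to the paper's. The paper simply applies Theorem \ref{thm_decay}(a) with the conjugate operator $\widetilde A$, the net $(\e^{-itH})_{t>0}$, and $\widetilde\varphi=\varphi$, $\psi=D\varphi$: the resulting bound $|\langle D\varphi,\e^{-itH}D\varphi\rangle_\H|\le\|(D-D_t)\varphi\|_\H\|D\varphi\|_\H+\tfrac1t c$ is exactly your three-term decomposition in the diagonal case $\varphi_1=\varphi_2$, and the hypothesis $D\dom(A)\subset\dom(A)$ enters for precisely the reason you isolate, namely to place $D\varphi$ in $\dom(\widetilde A)$ so that $\widetilde A$ can be moved onto it (this is the step $\langle[A,U_j]U_j^{-1}\widetilde\varphi,U_j\psi\rangle=\langle A\widetilde\varphi,U_j\psi\rangle-\langle\widetilde\varphi,U_jA\psi\rangle$ inside the proof of Theorem \ref{thm_decay}(a)). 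Two small remarks: your formula $\widetilde A^*=(H-i)^{-1}A(H+i)^{-1}$ is misstated --- $\widetilde A$ is essentially self-adjoint, so $\widetilde A^*$ is the closure of $(H+i)^{-1}A(H-i)^{-1}$ itself --- but all you actually use is $D\varphi_1\in\dom(\widetilde A^*)$, which is fine; and your polarized version with $\varphi_1\ne\varphi_2$ followed by an approximation of the singular part is more elaborate than necessary, since the diagonal case already shows $D\varphi\in\H_{\rm ac}(H)$ directly (positive measure with square-integrable Fourier transform), after which density of $D\dom(A)$ in $\ker(D)^\perp$ and closedness of $\H_{\rm ac}(H)$ finish the proof, as in the paper.
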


Point (c) implies that $\ker(D)^\perp$ is a reducing subspace for $H$. Therefore the
operator $H|_{\ker(D)^\perp}$ in points (d)-(e) is a well-defined self-adjoint
operator (see \cite[Thm.~7.28]{Wei_1980}).

\begin{proof}
(a) Let $\varphi\in\H$. Then $P_{\rm p}(H)\varphi=\sum_{j\ge1}\alpha_j\varphi_j$ with
$(\varphi_j)_{j\ge1}$ an orthonormal basis of $\H_{\rm p}(H)$, $\alpha_j\in\C$, and
$H\varphi_j=\lambda_j\varphi_j$ for some $\lambda_j\in\R$. Thus we obtain
\begin{equation}\label{eq_ex_H}
\slim_{t\to\infty}D_tP_{\rm p}(H)\varphi
=\slim_{t\to\infty}\sum_{j\ge1}\alpha_j
\left(\tfrac1t\int_0^t\d\tau\,\e^{-i\tau(H-\lambda_j)}\right)
(H+i)^{-1}[iH,A](H-i)^{-1}\varphi_j.
\end{equation}
Now
$
\|\tfrac1t\int_0^t\d\tau\,\e^{-i\tau(H-\lambda_j)}\|_{\B(\H)}\le1
$
for all $t>0$, and
$$
\slim_{t\to\infty}\tfrac1t\int_0^t\d\tau\,\e^{-i\tau(H-\lambda_j)}=E^H(\{\lambda_j\})
$$
due to von Neumann's mean ergodic theorem. Therefore we can exchange the limit and the
sum in \eqref{eq_ex_H} to get
\begin{align*}
\slim_{t\to\infty}D_tP_{\rm p}(H)\varphi
&=\sum_{j\ge1}\alpha_jE^H(\{\lambda_j\})(H+i)^{-1}[iH,A](H-i)^{-1}\varphi_j\\
&=\sum_{j\ge1}\alpha_j\langle\lambda_j\rangle^{-2}
E^H(\{\lambda_j\})[iH,A]E^H(\{\lambda_j\})\varphi_j.
\end{align*}
But $E^H(\{\lambda_j\})[iH,A]E^H(\{\lambda_j\})=0$ for each $\lambda_j$ due to the
virial theorem for self-adjoint operators \cite[Prop.~7.2.10]{ABG_1996}. Thus we
obtain that $\slim_{t\to\infty}D_tP_{\rm p}(H)\varphi=0$, which proves the claim.

(b) Let $K:=(H+i)^{-1}([iH,A]-B)(H-i)^{-1}\in\K(\H)$. Then it follows from point (a)
that
\begin{align*}
\slim_{t\to\infty}D_t
&=\slim_{t\to\infty}\tfrac1t\int_0^t\d\tau\,
\e^{-i\tau H}(H+i)^{-1}B(H-i)^{-1}\e^{i\tau H}P_{\rm c}(H)\\
&\quad+\slim_{t\to\infty}\tfrac1t\int_0^t\d\tau\,\e^{-i\tau H}K\e^{i\tau H}P_{\rm c}(H).
\end{align*}
with
$
\slim_{t\to\infty}\tfrac1t\int_0^t\d\tau\,\e^{-i\tau H}K\e^{i\tau H}P_{\rm c}(H)=0
$
due to \cite[Thm.~5.9]{Tes_2014}.

(c) The claim follows from Proposition \ref{prop_add}(b) in the case of the additive
group $X=\R$ and the auxiliary operator
$$
\widetilde A\varphi:=(H+i)^{-1}A(H-i)^{-1}\varphi,\quad\varphi\in\dom(A).
$$
Indeed, we know from \cite[Cor.~2.7 \& Rem.~2.8]{RT19} that $\widetilde A$ is
essentially self-adjoint (with closure denoted by the same symbol) and that
$\e^{-itH}\in C^1(\widetilde A)$ with $[\widetilde A,\e^{-itH}]=tD_t\e^{-itH}$ for any
$t>0$. Therefore, if we take the proper length function $\ell:\R\to[0,\infty)$ given
by $\ell(t):=|t|$, the unitary representation $\UU:\R\to\U(\H)$ given by
$\UU(t):=\e^{-itH}$, and the net $(x_j)_{j\in J}=(t)_{t>0}$, then we get
$\UU(s)=\e^{-isH}\in C^1(\widetilde A)$ for all $s\in\R$ and
\begin{align*}
\widetilde D
=\slim_{t\to\infty}\tfrac1{t-s}[\widetilde A,\e^{-i(t-s)H}]\e^{i(t-s)H}
=\slim_{t\to\infty}\tfrac1t[\widetilde A,\e^{-itH}]\e^{itH}
=\slim_{t\to\infty}D_t
=D.
\end{align*}
So all the assumptions of Proposition \ref{prop_add}(b) are verified, and thus
$[D,\e^{isH}]=0$ for all $s\in\R$.

Finally, since $[D,\e^{isH}]=0$ for all $s\in\R$, we have
$D\chi_{\mathcal B}(H)=\chi_{\mathcal B}(H)D$ for each Borel set
$\mathcal B\subset\R$, and thus $D$ is decomposable in the spectral representation of
$H$ \cite[Thm.~7.2.3(b)]{BS_1987}.

(d) The equality $D=DP_{\rm c}(H)$ follows from point (a). As a consequence, we get
that $\ker(D)^\perp\subset\H_{\rm c}(H)$, and thus the operator $H|_{\ker(D)^\perp}$
has purely continuous spectrum.

(e) Take $\varphi\in\dom(A)$. Then
$\psi=D\varphi\in D\dom(\widetilde A)\cap\dom(\widetilde A)$, and it follows from
Theorem \ref{thm_decay}(a) that there exists $c_\psi\ge0$ such that
$$
\big|\langle\psi,\e^{-itH}\psi\rangle_\H\big|
\le\big\|(D-D_t)\varphi\big\|_\H\;\!\|\psi\|_\H+\tfrac1t\;\!c_\psi,\quad t>0.
$$
So, we infer from the assumption and Cauchy-Schwarz inequality that
$\int_1^\infty\d t\;\!|\langle\psi,\e^{-itH}\psi\rangle_\H|^2<\infty$, and thus that
$t\mapsto\langle\psi,\e^{-itH}\psi\rangle_\H$ belongs to $\ltwo(\R)$. Therefore,
Plancherel's theorem for the group $X=\R$ \cite[Thm.~4.26]{Fol16} implies that
$\psi\in D\dom(A)$ belongs to the a.c. subspace $\H_{\rm ac}(H)$ of $H$. Thus
$H|_{\ker(D)^\perp}$ has purely a.c. spectrum, since $D\dom(A)$ is dense in
$\overline{D\H}=\ker(D)^\perp$ and $\H_{\rm ac}(H)$ is closed in $\H$.
\end{proof}

In the next proposition, we consider the particular case where $[iH,A]=f(H)$ for some
Borel function $f:\R\to\R$. Our results in this case generalise the results of
\cite[Cor.~4.3-4.4]{Tie_2017}.

\begin{Proposition}[The case {$[iH,A]=f(H)$}]\label{prop_f(H)}
Let $H$ and $A$ be self-adjoint operators in a Hilbert space $\H$, assume that
$(H-i)^{-1}\in C^1(A)$ with $[iH,A]=f(H)$ for some Borel function $f:\R\to\R$, and set
$g:=f\langle\cdot\rangle^{-2}$. Then
\begin{enumerate}
\item[(a)] For each $\varphi\in g(H)\dom(A)$ and $\psi\in\dom(A)$ there exists a
constant $c_{\varphi,\psi}\ge0$ such that
$$
\big|\langle\varphi,\e^{-itH}\psi\rangle_\H\big|
\le\tfrac1t\;\!c_{\varphi,\psi},\quad t>0.
$$
\item[(b)] If $g(H)\dom(A)\subset\dom(A)$, then $H|_{\ker(f(H))^\perp}$ has purely
a.c. spectrum.
\item[(c)] Suppose that $f\in C^n(\R)$ ($n\in\N^*$) with
$g^{(k)}\in\ltwo(\R)\cap\linf(\R)$ for all $k=0,\dots,n$ and $g^{(n)}$ uniformly
continuous. Then for each $\varphi\in g(H)^n\dom(A^n)$ and $\psi\in\dom(A^n)$ there
exists a constant $c_{\varphi,\psi}\ge0$ such that
$$
\big|\langle\varphi,\e^{-itH}\psi\rangle_\H\big|
\le\tfrac1{t^n}\;\!c_{\varphi,\psi},\quad t>0.
$$
\end{enumerate}
\end{Proposition}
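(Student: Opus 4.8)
The crux is the observation that, when $[iH,A]=f(H)$, the operators $D_t$ do not depend on $t$. Indeed, $(H+i)^{-1}[iH,A](H-i)^{-1}=(H+i)^{-1}f(H)(H-i)^{-1}=f(H)\langle H\rangle^{-2}=g(H)$, and since $g(H)$ commutes with the unitaries $\e^{\pm i\tau H}$ the integrand defining $D_t$ reduces to $g(H)$ for every $\tau$. Hence $D_t=g(H)$ for all $t>0$, so the strong limit $D:=\slim_{t\to\infty}D_t=g(H)$ exists and, crucially, $D=D_t$ for every $t$. Following the proof of Lemma \ref{lemma_prop_H}(c), I would then work with the auxiliary self-adjoint operator $\widetilde A:=(H+i)^{-1}A(H-i)^{-1}$, for which $\e^{-itH}\in C^1(\widetilde A)$ with $[\widetilde A,\e^{-itH}]=tD_t\e^{-itH}$; thus Theorem \ref{thm_decay} applies to the net $(\e^{-itH})_{t>0}$ with length function $\ell(t)=t$ and the \emph{constant} family $D_t=D=g(H)$.

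With these identifications parts (a) and (b) are almost immediate. For (a), since $D=D_t$ for all $t$, Theorem \ref{thm_decay}(b) yields $|\langle\varphi,\e^{-itH}\psi\rangle_\H|\le t^{-1}c_{\varphi,\psi}$ for every $\varphi\in D\dom(\widetilde A)=g(H)\dom(\widetilde A)$ and $\psi\in\dom(\widetilde A)$; restricting to $\varphi\in g(H)\dom(A)$ and $\psi\in\dom(A)$ is legitimate because $\dom(A)\subset\dom(\widetilde A)$. For (b), I would invoke Lemma \ref{lemma_prop_H}(e): the hypothesis $g(H)\dom(A)\subset\dom(A)$ is exactly $D\dom(A)\subset\dom(A)$, while the integrability condition $\int_1^\infty\d t\,\|(D-D_t)\varphi\|_\H^2<\infty$ holds trivially since $D-D_t=0$. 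The lemma gives that $H|_{\ker(D)^\perp}$ is purely a.c., and because $\langle\cdot\rangle^{-2}$ is nowhere zero one has $\ker(D)=\ker(g(H))=\ker(f(H))$, which is the stated conclusion.

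The substance lies in (c), where I would verify the algebraic hypotheses of Theorem \ref{thm_decay}(c) with the operator $A$ there taken to be $\widetilde A$. The relation $[iH,A]=f(H)$ is equivalent to $[A,H]=if(H)$, and a commutator expansion (valid for sufficiently regular $\phi$) gives $[A,\phi(H)]=if(H)\phi'(H)$; conjugating by the resolvents yields $[\widetilde A,\phi(H)]=(H+i)^{-1}[A,\phi(H)](H-i)^{-1}=ig(H)\phi'(H)$. Taking $\phi=g$ shows $[\widetilde A,D]=[\widetilde A,g(H)]=g(H)\cdot\big(ig'(H)\big)=DB$ with $B:=ig'(H)$, while $[D,B]=0$ because both are functions of $H$. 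There remains the regularity $D\in C^1(\widetilde A)$ and $B\in C^{(n-1)}(\widetilde A)$: iterating the commutator formula shows that $\mathrm{ad}^k_{\widetilde A}(g'(H))$ is a finite sum of products of $g$ and its derivatives up to order $k+1$, so the bounds $g^{(k)}\in\linf(\R)$ for $k\le n$ render all commutators up to order $n-1$ bounded, the $\ltwo(\R)$ conditions let one invoke the $C^k$-regularity criteria of Appendix \ref{sec_comm}, and the uniform continuity of $g^{(n)}$ secures the top-order commutator. With $D\in C^1(\widetilde A)$, $[\widetilde A,D]=DB$, $B\in C^{(n-1)}(\widetilde A)$ and $[D,B]=0$ established, Theorem \ref{thm_decay}(c) delivers $|\langle\varphi,\e^{-itH}\psi\rangle_\H|\le t^{-n}c_{\varphi,\psi}$ for $\varphi\in D^n\dom(\widetilde A^n)=g(H)^n\dom(\widetilde A^n)$ and $\psi\in\dom(\widetilde A^n)$, and restricting to $g(H)^n\dom(A^n)$ and $\dom(A^n)$ via $\dom(A^n)\subset\dom(\widetilde A^n)$ finishes the proof.

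I expect the main obstacle to be the bookkeeping in this last step: turning the formal identity $[\widetilde A,\phi(H)]=ig(H)\phi'(H)$ into rigorous statements about the classes $C^k(\widetilde A)$, i.e.\ justifying the commutator expansion for $\phi=g,g'$ and matching exactly which Sobolev-type control of $g$ (the $\ltwo(\R)\cap\linf(\R)$ bounds on $g^{(0)},\dots,g^{(n)}$ together with uniform continuity of $g^{(n)}$) is needed at each order. Everything else — the collapse $D_t=g(H)$, and the two applications of the general theorem and of the properties lemma — is routine.
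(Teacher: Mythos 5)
Your proposal follows the paper's proof essentially verbatim: the collapse $D_t=g(H)=D$ for all $t>0$, the use of the auxiliary operator $\widetilde A=(H+i)^{-1}A(H-i)^{-1}$ with Theorem \ref{thm_decay}(b) for part (a), Lemma \ref{lemma_prop_H}(e) with the trivially satisfied integrability condition for part (b), and the identification $[\widetilde A,D]=DB$ with $B=ig'(H)$, $[D,B]=0$, feeding into Theorem \ref{thm_decay}(c) for part (c). The single step you defer as bookkeeping --- rigorously establishing $[\widetilde A,g(H)]=ig(H)g'(H)$ and the $C^{(n-1)}(\widetilde A)$ regularity of $B$ --- is precisely where the paper spends its effort, using the Gaussian-regularized Fourier inversion $g^\varepsilon(H)=\int_\R\d t\,(\F g)(t)\e^{-(\varepsilon t)^2}\e^{2\pi itH}$ together with the rigorous identity $[\widetilde A,\e^{-itH}]=tD\e^{-itH}$ (this is exactly where the hypotheses $g^{(k)}\in\ltwo(\R)\cap\linf(\R)$ and the uniform continuity of $g^{(n)}$ enter), so your assessment of what remains to be checked is accurate.
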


\begin{proof}
(a) We know from the proof of Lemma \ref{lemma_prop_H}(c) that the auxiliary operator
$$
\widetilde A\varphi=(H+i)^{-1}A(H-i)^{-1}\varphi,\quad\varphi\in\dom(A),
$$
is essentially self-adjoint (with closure denoted by the same symbol) and that
$\e^{-itH}\in C^1(\widetilde A)$ with $[\widetilde A,\e^{-itH}]=tD_t\e^{-itH}$ for any
$t>0$. So the strong limit $D$ exists and satisfies the equalities
\begin{equation}\label{eq_same}
D_t
=\tfrac1t[\widetilde A,\e^{-itH}]\e^{itH}
=\tfrac1t\int_0^t\d\tau\,\e^{-i\tau H}(H+i)^{-1}[iH,A](H-i)^{-1}\e^{i\tau H}
=g(H)
=D.
\end{equation}
Therefore the assumptions of Theorem \ref{thm_decay}(b) are satisfied for the net
$(U_j)_{j\in J}=(\e^{-itH})_{t>0}$, the set $(\ell_j)_{j\in J}=(t)_{t>0}$, the
operator $\widetilde A$, and $D=g(H)$. Thus for each
$\varphi\in g(H)\dom(A)\subset g(H)\dom(\widetilde A)$ and
$\psi\in\dom(A)\subset\dom(\widetilde A)$ there exists a constant
$c_{\varphi,\psi}\ge0$ such that
$$
\big|\langle\varphi,\e^{-itH}\psi\rangle_\H\big|
\le\tfrac1t\;\!c_{\varphi,\psi},\quad t>0.
$$

(b) The claim follows from Lemma \ref{lemma_prop_H}(e) because $D=g(H)=D_t$ for all
$t>0$ and
$$
\ker(g(H))
=\chi_{g^{-1}(\{0\})}(H)\H
=\chi_{f^{-1}(\{0\})}(H)\H
=\ker(f(H)).
$$

(c) Since $g$ is uniformly continuous and belongs to $\ltwo(\R)\cap\linf(\R)$, we have
\begin{equation}\label{eq_approx}
\lim_{\varepsilon\searrow0}\big\|g(H)-g^\varepsilon(H)\big\|_{\B(\H)}=0
\end{equation}
with
$$
g^\varepsilon(H):=\int_\R\d t\;\!(\F g)(t)\e^{-(\varepsilon t)^2}\e^{2\pi itH}
\quad\hbox{(strong or Bochner integral)}
$$
and $\F:\ltwo(\R)\to\ltwo(\R)$ the Fourier transform (see
\cite[Thm.~8.35(b)]{Fol_1999}). Using successively the fact that $D=g(H)$, equation
\eqref{eq_approx}, the inclusion $\e^{2\pi itH}\dom(A)\subset\dom(\widetilde A)$,
equation \eqref{eq_same}, the relation $2\pi it(\F g)(t)=(\F g')(t)$, and
\eqref{eq_approx} with $g$ replaced by $g'$, we get for $\varphi\in\dom(A)$ the
equalities
\begin{align*}
\langle\widetilde A\varphi,D\varphi\rangle_\H
-\langle\varphi,D\widetilde A\varphi\rangle_\H
&=\lim_{\varepsilon\searrow0}\int_\R\d t\,\overline{(\F g)(t)}\e^{-(\varepsilon t)^2}
\big(\langle\widetilde A\varphi,\e^{2\pi itH}\varphi\rangle_\H
-\langle\varphi,\e^{2\pi itH}\widetilde A\varphi\rangle_\H\big)\\
&=\lim_{\varepsilon\searrow0}\int_\R\d t\,\overline{(\F g)(t)}
\e^{-(\varepsilon t)^2}\langle\varphi,[\widetilde A,\e^{2\pi itH}]\varphi\rangle_\H\\
&=\lim_{\varepsilon\searrow0}\int_\R\d t\,\overline{(\F g)(t)}
\e^{-(\varepsilon t)^2}\langle\varphi,(-2\pi t)D\e^{2\pi itH}\varphi\rangle_\H\\
&=\lim_{\varepsilon\searrow0}\int_\R\d\tau\,\overline{i(\F g')(t)}
\e^{-(\varepsilon t)^2}\langle\varphi,D\e^{2\pi itH}\varphi\rangle_\H\\
&=\langle\varphi,iDg'(H)\varphi\rangle_\H.
\end{align*}
Since $g'(H)\in\B(\H)$ due to the inclusion $g'\in\linf(\R)$, we infer that
$D\in C^1(\widetilde A)$ with $[\widetilde A,D]=iDg'(H)$. Thus, in the present case,
the operator $B\in\B(\H)$ appearing in the statement of Theorem \ref{thm_decay}(c) is
$B=ig'(H)$. So we trivially get that $[D,B]=0$, and by reproducing $(n-1)$ times the
previous argument we obtain that $B\in C^{n-1}(\widetilde A)$.

Summing up, the assumptions of Theorem \ref{thm_decay}(c) are satisfied for the net
$(U_j)_{j\in J}=(\e^{-itH})_{t>0}$, the set $(\ell_j)_{j\in J}=(t)_{t>0}$, the
operator $\widetilde A$, and $D=g(H)$. Thus for each
$\varphi\in g(H)^n\dom(A^n)\subset g(H)^n\dom((\widetilde A)^n)$ and
$\psi\in\dom(A^n)\subset\dom((\widetilde A)^n)$ there exists a constant
$c_{\varphi,\psi}\ge0$ such that
$$
\big|\langle\varphi,\e^{-itH}\psi\rangle_\H\big|
\le\tfrac1{t^n}\;\!c_{\varphi,\psi},\quad t>0.
$$
\end{proof}

\begin{Remark}\label{rem_other_f}
If the operators $H$ and $A$ satisfy the commutation relation $[iH,A]=f(H)$ for some
Borel function $f:\R\to\R$, then they satisfy the relation $[iH,A]=\widetilde f(H)$
for any Borel function $\widetilde f:\R\to\R$ such that
$\widetilde f|_{\sigma(H)}=f|_{\sigma(H)}$ due to functional calculus. This basic
observation will be useful in some applications in which the initial function $f$
fails to satisfy the assumptions of Proposition \ref{prop_f(H)}(b) or
\ref{prop_f(H)}(c) (see for instance Section \ref{sec_Dirac}).
\end{Remark}

\subsection{Unitary representations with unitary generator}\label{sec_unit}

In this section, we consider the important case where the representation is a unitary
representation $\UU:\Z\to\U(\H)$ of the additive group $\Z$. In such a case, the fact
that $\Z$ has generator $1$ implies the existence of a unitary operator $U$ in $\H$
such that $\UU(m)=U^m$ for each $m\in\Z$. One could also consider the
higher-dimensional case of a unitary representation of the additive group $\Z^d$ for
$d\ge1$. But we refrained from doing it for the sake of simplicity.

We use the notation $P_{\rm p}(U)$ (resp. $P_{\rm c}(U)$, $P_{\rm ac}(U)$) for the
projection onto the pure point (resp. continuous, absolutely continuous) subspace
$\H_{\rm p}(U)$ (resp. $\H_{\rm c}(U)$, $\H_{\rm ac}(U)$) of $U$, $E^U(\cdot)$ for the
spectral projections of $U$, and $\chi_\Theta$ for the characteristic function of a
Borel set $\Theta\subset\S^1$.

\begin{Lemma}[Properties of $D$]\label{lemma_prop_U}
Let $U$ and $A$ be a unitary and a self-adjoint operator in a Hilbert space $\H$ with
$U\in C^1(A)$, and let
$$
D_n:=\tfrac1n[A,U^n]U^{-n},\quad n\in\N^*.
$$
Then
\begin{enumerate}
\item[(a)] $\slim_{n\to\infty}D_nP_{\rm p}(U)=0$.
\item[(b)] If there exists $B\in\B(\H)$ such that $[A,U]U^{-1}-B\in\K(\H)$ and
$\slim_{n\to\infty}\tfrac1n\sum_{m=0}^{n-1}U^mBU^{-m}P_{\rm c}(U)$ exists, then
$$
\slim_{n\to\infty}D_n
=\slim_{n\to\infty}\tfrac1n\sum_{m=0}^{n-1}U^mBU^{-m}P_{\rm c}(U).
$$
\end{enumerate}
Furthermore, if $D:=\slim_{n\to\infty}D_n$ exists, then
\begin{enumerate}
\item[(c)] $[D,U^m]=0$ for all $m\in\Z$. In particular, $D$ is decomposable in the
spectral representation of $U$.
\item[(d)] $D=DP_{\rm c}(U)$. In particular, $U|_{\ker(D)^\perp}$ has purely
continuous spectrum.
\item[(e)] If $D\dom(A)\subset\dom(A)$ and
$\sum_{n\ge1}\|(D-D_n)\varphi\|^2_\H<\infty$ for all $\varphi\in\dom(A)$, then
$U|_{\ker(D)^\perp}$ has purely a.c. spectrum.
\end{enumerate}
\end{Lemma}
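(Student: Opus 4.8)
The statement is the discrete-time counterpart of Lemma \ref{lemma_prop_H}, and the plan is to transcribe that proof with the time average $\tfrac1t\int_0^t\e^{-i\tau H}(\,\cdot\,)\e^{i\tau H}\,\d\tau$ replaced throughout by the Cesàro mean $\tfrac1n\sum_{m=0}^{n-1}U^m(\,\cdot\,)U^{-m}$. The first step is to record the telescoping identity $[A,U^n]=\sum_{m=0}^{n-1}U^m[A,U]\,U^{n-1-m}$, which recasts $D_n$ in the Cesàro form $D_n=\tfrac1n\sum_{m=0}^{n-1}U^mS\,U^{-m}$ with $S:=[A,U]U^{-1}=A-UAU^{-1}$. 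Because $U\in C^1(A)$, the operator $S$ extends to a bounded self-adjoint operator, so every $D_n$ is bounded and self-adjoint with $\|D_n\|_{\B(\H)}\le\|S\|_{\B(\H)}$; hence $D=\slim_nD_n$, when it exists, is bounded and self-adjoint, which justifies using $\ker(D)^\perp=\overline{\Ran D}=\overline{D\H}$ in points (d) and (e).

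For (a) I would expand $P_{\rm p}(U)\varphi=\sum_j\alpha_j\varphi_j$ in an orthonormal eigenbasis with $U\varphi_j=\mu_j\varphi_j$, $\mu_j\in\S^1$, and reduce $D_n\varphi_j=\overline{\mu_j}\,\big(\tfrac1n\sum_{m=0}^{n-1}(\overline{\mu_j}\,U)^m\big)[A,U]\varphi_j$. Von Neumann's mean ergodic theorem applied to the unitary $\overline{\mu_j}\,U$ gives $\tfrac1n\sum_{m=0}^{n-1}(\overline{\mu_j}\,U)^m\to E^U(\{\mu_j\})$ strongly, so $\slim_nD_n\varphi_j=\overline{\mu_j}\,E^U(\{\mu_j\})[A,U]\varphi_j$, which vanishes by the virial theorem for unitary operators since $\varphi_j=E^U(\{\mu_j\})\varphi_j$. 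The uniform bound $\|D_n\|\le\|S\|$ then lets me exchange the strong limit with the sum over $j$, yielding $\slim_nD_nP_{\rm p}(U)=0$. For (b) I would write $S=B+K$ with $K:=S-B\in\K(\H)$, use (a) to reduce to $\slim_nD_n=\slim_nD_nP_{\rm c}(U)$, and split $D_nP_{\rm c}(U)$ into the $B$-part $\tfrac1n\sum_mU^mBU^{-m}P_{\rm c}(U)$ (whose limit exists by hypothesis) and the compact $K$-part, the latter converging strongly to $0$ by the RAGE-type theorem for unitary operators of Appendix \ref{sec_RAGE}.

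For (c) I would apply Proposition \ref{prop_add}(b) with $X=\Z$, the proper length function $\ell(m)=|m|$, the representation $\UU(m)=U^m$, and the net $(n)_{n\in\N^*}$ (which satisfies $n\to\infty$). As $U\in C^1(A)$ and $C^1(A)$ is stable under products and adjoints, $\UU(k)=U^k\in C^1(A)$ for every $k\in\Z$; moreover $\widetilde D_n=\tfrac1{|n-k|}[A,U^{n-k}]U^{-(n-k)}=D_{n-k}$ for $n>k$, so $\widetilde D=\slim_nD_{n-k}=D$ and the proposition gives $[D,U^k]=0$. Since $k$ is arbitrary, $[D,U^m]=0$ for all $m\in\Z$, so $D$ commutes with every $\chi_\Theta(U)$ and is therefore decomposable in the spectral representation of $U$. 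Part (d) is then immediate: (a) gives $DP_{\rm p}(U)=0$, hence $D=DP_{\rm c}(U)$ and $\ker(D)^\perp=\overline{D\H}\subset\H_{\rm c}(U)$, i.e.\ $U|_{\ker(D)^\perp}$ has purely continuous spectrum.

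For (e), given $\varphi\in\dom(A)$ I would put $\psi:=D\varphi\in D\dom(A)\cap\dom(A)$ (the inclusion $D\dom(A)\subset\dom(A)$ being assumed) and apply Theorem \ref{thm_decay}(a) to obtain $|\langle\psi,U^n\psi\rangle_\H|\le\|(D-D_n)\varphi\|_\H\,\|\psi\|_\H+\tfrac1n\,c_\psi$. Squaring, the hypothesis $\sum_{n\ge1}\|(D-D_n)\varphi\|_\H^2<\infty$ and $\sum_{n\ge1}n^{-2}<\infty$ give $\sum_{n\ge1}|\langle\psi,U^n\psi\rangle_\H|^2<\infty$, and $\langle\psi,U^{-n}\psi\rangle_\H=\overline{\langle\psi,U^n\psi\rangle_\H}$ upgrades this to $(\langle\psi,U^n\psi\rangle_\H)_{n\in\Z}\in\ell^2(\Z)$. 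These are the moments $\int_{\S^1}z^n\,\d\mu_\psi(z)$ of the spectral measure $\mu_\psi$ of $\psi$ for $U$; via the Parseval isomorphism $\ell^2(\Z)\cong\ltwo(\S^1)$ their square-summability produces an $\ltwo$ function with the same Fourier coefficients, and since a finite measure on $\S^1$ is determined by its moments, $\mu_\psi$ is absolutely continuous, i.e.\ $\psi\in\H_{\rm ac}(U)$; density of $D\dom(A)$ in $\ker(D)^\perp$ and closedness of $\H_{\rm ac}(U)$ then give $\ker(D)^\perp\subset\H_{\rm ac}(U)$. The two places where genuine (though standard) analytic input is required are the virial theorem for unitary operators in (a), which must be justified at the level of domains so that $E^U(\{\mu_j\})[A,U]E^U(\{\mu_j\})=0$ holds for $U\in C^1(A)$ rather than merely formally, and the Fourier-analytic fact used in (e) that square-summable moments of a finite measure on $\S^1$ force absolute continuity; everything else transcribes Lemma \ref{lemma_prop_H} with the Cesàro mean in place of the time average.
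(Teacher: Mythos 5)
Your proposal is correct and follows essentially the same route as the paper: the telescoping identity recasting $D_n$ as a Ces\`aro mean, von Neumann's mean ergodic theorem plus the virial theorem for unitary operators in (a), the RAGE-type theorem of Appendix \ref{sec_RAGE} in (b), Proposition \ref{prop_add}(b) with $X=\Z$ and $\ell(m)=|m|$ in (c), and Theorem \ref{thm_decay}(a) combined with Plancherel for $\Z$ in (e). The only additions are harmless elaborations (self-adjointness of $[A,U]U^{-1}$ and the moment-problem phrasing of the Plancherel step), which the paper leaves implicit.
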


Point (c) implies that $\ker(D)^\perp$ is a reducing subspace for $U$. Therefore the
operator $U|_{\ker(D)^\perp}$ in points (d)-(e) is a well-defined unitary operator
(see \cite[Example~5.39(b)]{Wei_1980}).

\begin{proof}
(a) Let $\varphi\in\H$. Then $P_{\rm p}(U)\varphi=\sum_{j\ge1}\alpha_j\varphi_j$ with
$(\varphi_j)_{j\ge1}$ an orthonormal basis of $\H_{\rm p}(U)$, $\alpha_j\in\C$, and
$U\varphi_j=\theta_j\varphi_j$ for some $\theta_j\in\S^1$. Furthermore, we have for
any $n\in\N^*$ the equalities
$$
D_n
=\tfrac1n[A,U^n]U^{-n}
=\tfrac1n\sum_{m=0}^{n-1}U^m([A,U]U^{-1})U^{-m}.
$$
Therefore, we obtain
\begin{equation}\label{eq_ex_U}
\slim_{n\to\infty}D_nP_{\rm p}(U)\varphi
=\slim_{n\to\infty}\sum_{j\ge1}\alpha_j
\left(\tfrac1n\sum_{m=0}^{n-1}(U\theta_j^{-1})^m\right)[A,U]U^{-1}\varphi_j.
\end{equation}
Now
$
\|\tfrac1n\sum_{m=0}^{n-1}(U\theta_j^{-1})^m\|_{\B(\H)}\le1
$
for all $n\in\N^*$, and
$$
\slim_{n\to\infty}\tfrac1n\sum_{m=0}^{n-1}(U\theta_j^{-1})^m=E^U(\{\theta_j\})
$$
due to von Neumann's mean ergodic theorem. Therefore we can exchange the limit and the
sum in \eqref{eq_ex_U} to get
\begin{align*}
\slim_{n\to\infty}D_nP_{\rm p}(U)\varphi
=\sum_{j\ge1}\alpha_jE^U(\{\theta_j\})[A,U]U^{-1}\varphi_j
=\sum_{j\ge1}\alpha_jE^U(\{\theta_j\})[A,U]U^{-1}E^U(\{\theta_j\})\varphi_j.
\end{align*}
But $E^U(\{\theta_j\})[A,U]U^{-1}E^U(\{\theta_j\})=0$ for each $\theta_j$ due to the
virial theorem for unitary operators \cite[Prop.~2.3]{FRT_2013}. Thus we obtain that
$\slim_{n\to\infty}D_nP_{\rm p}(U)\varphi=0$, which proves the claim.

(b) Let $K:=[A,U]U^{-1}-B\in\K(\H)$. Then it follows from point (a) that
\begin{align*}
\slim_{n\to\infty}D_n
&=\slim_{n\to\infty}\tfrac1n\sum_{m=0}^{n-1}U^m([A,U]U^{-1})U^{-m}P_{\rm c}(U)\\
&=\slim_{n\to\infty}\tfrac1n\sum_{m=0}^{n-1}U^mBU^{-m}P_{\rm c}(U)
+\slim_{n\to\infty}\tfrac1n\sum_{m=0}^{n-1}U^mKU^{-m}P_{\rm c}(U).
\end{align*}
with $\slim_{n\to\infty}\tfrac1n\sum_{m=0}^{n-1}U^mKU^{-m}P_{\rm c}(U)=0$ due to
Theorem \ref{thm_RAGE-type}.

(c) The claim follows from Proposition \ref{prop_add}(b) in the case of the additive
group $X=\Z$ and the operator $A$. Indeed, if we take the proper length function
$\ell:\Z\to[0,\infty)$ given by $\ell(n):=|n|$, the unitary representation
$\UU:\Z\to\U(\H)$ given by $\UU(n):=U^n$, and the net $(x_j)_{j\in J}=(n)_{n\in\N^*}$,
then we get $\UU(m)=U^m\in C^1(A)$ for all $m\in\Z$ and
$$
\widetilde D
=\slim_{n\to\infty}\tfrac1{n-m}[A,U^{n-m}]U^{-(n-m)}
=\slim_{n\to\infty}\tfrac1n[A,U^n]U^{-n}
=\slim_{n\to\infty}D_n
=D.
$$
So all the assumptions of Proposition \ref{prop_add}(b) are verified, and thus
$[D,U^m]=0$ for all $m\in\Z$.

Finally, since $[D,U^m]=0$ for all $m\in\Z$, we have $D\chi_\Theta(U)=\chi_\Theta(U)D$
for each Borel set $\Theta\subset\S^1$, and thus $D$ is decomposable in the spectral
representation of $U$ \cite[Thm.~7.2.3(b)]{BS_1987}.

(d) The equality $D=DP_{\rm c}(U)$ follows from point (a). As a consequence, we get
that $\ker(D)^\perp\subset\H_{\rm c}(U)$, and thus the operator $U|_{\ker(D)^\perp}$
has purely continuous spectrum.

(e) Take $\varphi\in\dom(A)$. Then $\psi=D\varphi\in D\dom(A)\cap\dom(A)$, and it
follows from Theorem \ref{thm_decay}(a) that there exists $c_\psi\ge0$ such that
$$
\big|\langle\psi,U^n\psi\rangle_\H\big|
\le\|(D-D_n)\varphi\|_\H\;\!\|\psi\|_\H+\tfrac1n\;\!c_\psi,\quad n\in\N^*.
$$
So, we infer from the assumption and Cauchy-Schwarz inequality that
$\sum_{n\ge1}|\langle\psi,U^n\psi\rangle_\H|^2<\infty$, and thus that
$n\mapsto\langle\psi,U^n\psi\rangle_\H$ belongs to $\ell^2(\Z)$. Therefore,
Plancherel's theorem for the group $X=\Z$ \cite[Thm.~4.26]{Fol16} implies that
$\psi\in D\dom(A)$ belongs to the a.c. subspace $\H_{\rm ac}(U)$ of $U$. Thus
$U|_{\ker(D)^\perp}$ has purely a.c. spectrum, since $D\dom(A)$ is dense in
$\overline{D\H}=\ker(D)^\perp$ and $\H_{\rm ac}(U)$ is closed in $\H$.
\end{proof}

In the next proposition, we consider the particular case where $[A,U]=\gamma(U)$ for
some Borel function $\gamma:\S^1\to\C$. Our results in this case generalise the
results of \cite[Cor.~3.3-3.4]{Tie_2017}.

\begin{Proposition}[The case {$[A,U]=\gamma(U)$}]\label{prop_gamma(U)}
Let $U$ and $A$ be a unitary and a self-adjoint operator in a Hilbert space $\H$,
assume that $U\in C^1(A)$ with $[A,U]=\gamma(U)$ for some Borel function
$\gamma:\S^1\to\C$, and set $\eta(U):=\gamma(U)U^{-1}$. Then
\begin{enumerate}
\item[(a)] For each $\varphi\in\eta(U)\dom(A)$ and $\psi\in\dom(A)$ there exists a
constant $c_{\varphi,\psi}\ge0$ such that
$$
\big|\langle\varphi,U^n\psi\rangle_\H\big|
\le\tfrac1n\;\!c_{\varphi,\psi},\quad n\in\N^*.
$$
\item[(b)] If $\eta(U)\dom(A)\subset\dom(A)$, then $U|_{\ker(\gamma(U))^\perp}$ has
purely a.c. spectrum.
\item[(c)] Suppose that $\gamma\in C^k(\S^1)$ ($k\in\N^*$). Then for each
$\varphi\in\eta(U)^k\dom(A^k)$ and $\psi\in\dom(A^k)$ there exists a constant
$c_{\varphi,\psi}\ge0$ such that
$$
\big|\langle\varphi,U^n\psi\rangle_\H\big|
\le\tfrac1{n^k}\;\!c_{\varphi,\psi},\quad n\in\N^*.
$$
\end{enumerate}
\end{Proposition}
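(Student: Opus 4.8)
The plan is to follow the pattern of Proposition \ref{prop_f(H)}, using throughout that the operators involved are functions of $U$ and therefore commute with each other and with $U$. The first step is to compute $D_n$ explicitly. From the telescoping identity $D_n=\tfrac1n[A,U^n]U^{-n}=\tfrac1n\sum_{m=0}^{n-1}U^m([A,U]U^{-1})U^{-m}$ recorded in the proof of Lemma \ref{lemma_prop_U}(a), the relation $[A,U]U^{-1}=\gamma(U)U^{-1}=\eta(U)$, and the fact that $\eta(U)$ commutes with $U$ (so that $U^m\eta(U)U^{-m}=\eta(U)$), each of the $n$ summands equals $\eta(U)$; hence $D_n=\eta(U)$ for every $n\in\N^*$. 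In particular $D=\slim_{n\to\infty}D_n$ exists and equals $\eta(U)$, with $D=D_n$ for all $n$. Since $U\in C^1(A)$ and $C^1(A)$ is stable under products, $U^n\in C^1(A)$ for every $n$, so the hypotheses of Theorem \ref{thm_decay} are met for the net $(U^n)_{n\in\N^*}$, the scalars $\ell_n=n$, and the operator $A$.

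Part (a) then follows at once from Theorem \ref{thm_decay}(b): with $D=D_n=\eta(U)$, vectors $\varphi\in\eta(U)\dom(A)=D\dom(A)$ and $\psi\in\dom(A)$ yield the claimed bound in $1/n$. For part (b), I would first observe, by functional calculus and the fact that $z\mapsto z^{-1}$ never vanishes on $\S^1$, that $\eta^{-1}(\{0\})=\gamma^{-1}(\{0\})$ and hence $\ker(D)=\ker(\eta(U))=\ker(\gamma(U))$. Lemma \ref{lemma_prop_U}(e) then applies: its hypothesis $D\dom(A)\subset\dom(A)$ is exactly the assumed inclusion $\eta(U)\dom(A)\subset\dom(A)$, while $D-D_n=0$ makes the square-summability condition trivial. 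This gives that $U|_{\ker(\gamma(U))^\perp}$ has purely a.c. spectrum.

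For part (c), the goal is to check the remaining hypotheses of Theorem \ref{thm_decay}(c) for $D=\eta(U)$. The computational heart is the commutator formula $[A,U^k]=kU^{k-1}\gamma(U)$, valid for all $k\in\Z$: for $k\ge1$ it comes from $[A,U]=\gamma(U)$ and the Leibniz rule together with the commutativity of $\gamma(U)$ and $U$, and the negative powers are obtained from $[A,U^{-1}]=-U^{-1}[A,U]U^{-1}$. Expanding $\eta$ in its Fourier series $\eta(z)=\sum_k\hat\eta_kz^k$ and writing $\eta'$ for the derivative in the angular variable on $\S^1$, so that $\eta'(U)=i\sum_kk\hat\eta_kU^k$, one formally computes $[A,\eta(U)]=(\sum_k\hat\eta_k\,kU^{k-1})\gamma(U)=-iU^{-1}\eta'(U)\gamma(U)=-i\eta'(U)\eta(U)$, i.e. $[A,D]=DB$ with $B:=-i\eta'(U)$. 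Since $B$ and $D$ are both functions of $U$, the relation $[D,B]=0$ holds automatically.

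The main obstacle is to make this formal manipulation rigorous, namely to establish that $\eta(U)\in C^1(A)$ with precisely the above commutator and, crucially, that $B=-i\eta'(U)\in C^{(k-1)}(A)$. I would handle this in analogy with the mollification used in Proposition \ref{prop_f(H)}(c): approximate $\eta(U)$ by regularized Fourier sums, use $\gamma\in C^k(\S^1)$ (whence $\eta\in C^k(\S^1)$, with rapidly decaying Fourier coefficients) to control the commutators uniformly, and pass to the limit in the appropriate topology to identify the bounded extension of $[A,\eta(U)]$ and to conclude membership in the regularity classes, invoking the commutator results of Appendix \ref{sec_comm}. Reproducing this argument $(k-1)$ times, now with $\eta'\in C^{k-1}(\S^1)$, yields $B\in C^{(k-1)}(A)$. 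Having verified $D=D_n$, $D\in C^1(A)$, $[A,D]=DB$, $B\in C^{(k-1)}(A)$ and $[D,B]=0$, Theorem \ref{thm_decay}(c) delivers the $1/n^k$ decay for $\varphi\in\eta(U)^k\dom(A^k)$ and $\psi\in\dom(A^k)$, which completes the proof.
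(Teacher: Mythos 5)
Your proposal is correct and follows essentially the same route as the paper: the identification $D_n=\eta(U)=D$ via the telescoping sum, Theorem \ref{thm_decay}(b) for (a), Lemma \ref{lemma_prop_U}(e) together with $\ker(\eta(U))=\ker(\gamma(U))$ for (b), and for (c) the Gaussian-regularized Fourier expansion of $\eta(U)$ to justify $D\in C^1(A)$ with $[A,D]=DB$ and $B\in C^{k-1}(A)$ (your $B=-i\eta'(U)$ in the angular-derivative convention is the paper's $B=U\eta'(U)$ in the complex-derivative convention). The only slip is the parenthetical claim that $\eta\in C^k(\S^1)$ has ``rapidly decaying'' Fourier coefficients --- it only gives $o(|m|^{-k})$ --- but this is harmless since the argument needs merely uniform convergence of the regularized sums for the continuous functions $\eta$ and $\id_{\S^1}\cdot\eta'$.
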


\begin{proof}
(a) The strong limit $D$ exists and satisfies for any $n\in\N^*$ the equalities
\begin{equation}\label{eq_same_bis}
D_n
=\tfrac1n\sum_{m=0}^{n-1}U^m([A,U]U^{-1})U^{-m}
=\eta(U)
=D.
\end{equation}
Therefore the assumptions of Theorem \ref{thm_decay}(b) are satisfied for the net
$(U_j)_{j\in J}=(U^n)_{n\in\N^*}$, the set $(\ell_j)_{j\in J}=(n)_{n\in\N^*}$, the
operator $A$, and $D=\eta(U)$. Thus for each $\varphi\in\eta(U)\dom(A)$ and
$\psi\in\dom(A)$ there exists a constant $c_{\varphi,\psi}\ge0$ such that
$$
\big|\langle\varphi,U^n\psi\rangle_\H\big|
\le\tfrac1n\;\!c_{\varphi,\psi},\quad n\in\N^*.
$$

(b) The claim follows from Lemma \ref{lemma_prop_U}(e) because $D=\eta(U)=D_n$ for all
$n\in\N^*$ and
$$
\ker(\eta(U))
=\chi_{\eta^{-1}(\{0\})}(U)\H
=\chi_{\gamma^{-1}(\{0\})}(U)\H
=\ker(\gamma(U)).
$$

(c) Since $\eta\in C(\S^1)$, we have that
\begin{equation}\label{eq_approx_bis}
\lim_{\varepsilon\searrow0}\big\|\eta(U)-\eta^\varepsilon(U)\big\|_{\B(\H)}=0
\end{equation}
with
$$
\eta^\varepsilon(U):=\sum_{m\in\Z}(\F\eta)(m)\e^{-(\varepsilon m)^2}U^m
\quad\hbox{(strong or operator norm sum)}
$$
and $\F:\ltwo(\S^1)\to\ltwo(\Z)$ the Fourier transform (see
\cite[Thm.~8.36(a)]{Fol_1999}). Using successively the fact that $D=\eta(U)$, equation
\eqref{eq_approx_bis}, the inclusion $U^m\dom(A)\subset\dom(A)$, equation
\eqref{eq_same_bis}, the relation $m(\F\eta)(m)=(\F(\id_{\S^1}\cdot\eta'))(m)$ with
$\id_{\S^1}$ the identity function on $\S^1$, and \eqref{eq_approx_bis} with $\eta$
replaced by $\id_{\S^1}\cdot\eta'$, we get for $\varphi\in\dom(A)$ the equalities
\begin{align*}
\langle A\varphi,D\varphi\rangle_\H-\langle\varphi,A\varphi\rangle_\H
&=\lim_{\varepsilon\searrow0}\sum_{m\in\Z}\overline{(\F\eta)(m)}
\e^{-(\varepsilon m)^2}\big(\langle A\varphi,U^m\varphi\rangle_\H
-\langle\varphi,U^mA\varphi\rangle_\H\big)\\
&=\lim_{\varepsilon\searrow0}\sum_{m\in\Z}\overline{(\F\eta)(m)}
\e^{-(\varepsilon m)^2}\langle\varphi,[A,U^m]\varphi\rangle_\H\\
&=\lim_{\varepsilon\searrow0}\sum_{m\in\Z}\overline{(\F\eta)(m)}
\e^{-(\varepsilon m)^2}\langle\varphi,mDU^m\varphi\rangle_\H\\
&=\lim_{\varepsilon\searrow0}\sum_{m\in\Z}
\overline{\big(\F(\id_{\S^1}\cdot\eta')\big)(m)}
\e^{-(\varepsilon m)^2}\langle\varphi,DU^m\varphi\rangle_\H\\
&=\langle\varphi,D(\id_{\S^1}\cdot\eta')(U)\varphi\rangle_\H\\
&=\langle\varphi,DU\eta'(U)\varphi\rangle_\H.
\end{align*}
Since $\eta'(U)\in\B(\H)$ due to the inclusion $\eta'\in C^{k-1}(\S^1)$, we infer that
$D\in C^1(A)$ with $[A,D]=DU\eta'(U)$. Thus, in the present case, the operator
$B\in\B(\H)$ appearing in the statement of Theorem \ref{thm_decay}(c) is
$B=U\eta'(U)$. So we trivially get that $[D,B]=0$, and by reproducing $(k-1)$ times
the previous argument we obtain that $B\in C^{k-1}(A)$.

Summing up, the assumptions of Theorem \ref{thm_decay}(c) are satisfied for the net
$(U_j)_{j\in J}=(U^n)_{n\in\N^*}$, the set $(\ell_j)_{j\in J}=(n)_{n\in\N^*}$, the
operator $A$, and $D=\eta(U)$. Thus for each $\varphi\in\eta(U)^k\dom(A^k)$ and
$\psi\in\dom(A^k)$ there exists a constant $c_{\varphi,\psi}\ge0$ such that
$$
\big|\langle\varphi,U^n\psi\rangle_\H\big|
\le\tfrac1{n^k}\;\!c_{\varphi,\psi},\quad n\in\N^*.
$$
\end{proof}

\section{Applications}\label{sec_app}
\setcounter{equation}{0}

In this section, we apply our results to various models in quantum mechanics and
dynamical systems. First, we consider an example of unitary representation with no
self-adjoint or unitary generator. Then we consider examples of unitary
representations having a self-adjoint generator. And finally we consider examples of
unitary representations having a unitary generator.

\subsection{Left regular representation}\label{sec_reg}

This example is motivated by \cite[Ex.~2.9]{RT19}. Let $X$ be a $\sigma$-compact
locally compact Hausdorff group with left Haar measure $\mu$ and proper length
function $\ell$ (see properties (L1)-(L4) in Section \ref{sec_general}). Let
$\H:=\ltwo(X,\mu)$, let $\DD\subset\H$ be the set of functions $X\to\C$ with compact
support, let $\UU:X\to\U(\H)$ be the left regular representation of $X$ on $\H$
$$
\UU(x)\varphi:=\varphi(x^{-1}\;\!\cdot),\quad x\in X,~\varphi\in\H,
$$
and let $A$ be the maximal multiplication operator by $\ell$ in $\H$
$$
A\varphi:=\ell\varphi,
\quad\varphi\in\dom(A):=\{\varphi\in\H\mid\|\ell\varphi\|_\H<\infty\}.
$$
For any $x\in X$, one has on $\DD$ the equality
$$
A\UU(x)-\UU(x)A=\big(\ell(\cdot)-\ell(x^{-1}\;\!\cdot)\big)\UU(x),
$$
with $|(\ell(\cdot)-\ell(x^{-1}\;\!\cdot))|\le\ell(x)$ due to properties (L2)-(L3).
Since $\DD$ is dense in $\dom(A)$, it follows that $\UU(x)\in C^1(A)$ with
\begin{equation}\label{eq_com_reg}
[A,\UU(x)]\UU(x)^{-1}=\ell(\cdot)-\ell(x^{-1}\;\!\cdot).
\end{equation}
In addition, we know from \cite[Ex.~2.9]{RT19} that for any net $(x_j)_{j\in J}$ in
$X$ with $x_j\to\infty$ we have
$$
D:=\slim_jD_j=\slim_j\tfrac1{\ell(x_j)}[A,\UU(x_j)]\UU(x_j)^{-1}=-1.
$$
Therefore, Theorem \ref{thm_decay}(a) applies for the net
$(U_j)_{j\in J}=(\UU(x_j))_{j\in J}$, the set
$(\ell_j)_{j\in J}=(\ell(x_j))_{j\in J}$, the operator $A$, and $D=-1$. Thus, for each
$\varphi,\psi\in\dom(A)$ there exists a constant $c_{\varphi,\psi}\ge0$ such that
$$
\big|\langle\varphi,\UU(x_j)\psi\rangle_\H\big|
\le\big\|(D-D_j)\varphi\big\|_\H\;\!\|\psi\|_\H
+\tfrac1{\ell(x_j)}\;\!c_{\varphi,\psi},\quad\ell(x_j)>0.
$$
But \eqref{eq_com_reg} and the bound
$|\ell(\cdot)-\ell(x_j^{-1}\;\!\cdot)+\ell(x_j)|\le 2\ell(\cdot)$ (which follows from
properties (L2)-(L3)) imply that
$$
\big\|(D-D_j)\varphi\big\|_\H 
=\Big\|\tfrac{\ell(\cdot)-\ell(x_j^{-1}\;\!\cdot)
+\ell(x_j)}{\ell(x_j)}\varphi\Big\|_\H
\le\tfrac2{\ell(x_j)}\|\ell\varphi\|_\H.
$$
Therefore, by setting
$\widetilde c_{\varphi,\psi}:=2\|\ell\varphi\|_\H\;\!\|\psi\|_\H+c_{\varphi,\psi}$, we
obtain that
$$
\big|\langle\varphi,\UU(x_j)\psi\rangle_\H\big|
\le\tfrac1{\ell(x_j)}\;\!\widetilde c_{\varphi,\psi},\quad\ell(x_j)>0.
$$

This estimate is similar to the estimates of Propositions \ref{prop_f(H)}(a) and
\ref{prop_gamma(U)}(a), but with the interesting difference that in general the
representation $\UU$ doesn't have neither a self-adjoint generator nor a unitary
generator. In addition, we note that one can even obtain higher order decay estimates
by carrying on the above calculations. But we refrained from presenting them here for
the sake of simplicity.

\subsection{Schrödinger operator in $\R^n$}\label{sec_Schrod}

In the Hilbert space $\H:=\ltwo(\R^n)$, consider the Schrödinger operator and the
generator of dilations given by
$$
H\varphi:=(-\Delta+V)\varphi
\quad\hbox{and}\quad
A\varphi:=\tfrac12(Q\cdot P+P\cdot Q),\quad\varphi\in\SS(\R^n),
$$
with $V\in\linf(\R^n,\R)$,  $Q:=(Q_1,\dots,Q_n)$ the position operator, $P:=-i\nabla$
the momentum operator and $\SS(\R^n)$ the Schwartz space on $\R^n$. Both operators are
essentially self-adjoint with closures denoted by the same symbols. If we assume that
$$
x\mapsto x\cdot(\nabla V)(x)\in\linf(\R^n,\R)
\quad\hbox{and}\quad
\lim_{|x|\to\infty}\big|2V(x)+x\cdot(\nabla V)(x)\big|=0,
$$
then we have $(H-i)^{-1}\in C^1(A)$ with $[iH,A]=2H-2V-Q\cdot\nabla V$, we have
$$
(H+i)^{-1}(2V+Q\cdot\nabla V)(H-i)^{-1}\in\K(\H)
$$
due to the standard theorem \cite[Prop.~4.1.3]{ABG_1996}, and
$$
\slim_{t\to\infty}\tfrac1t\int_0^t\d\tau\,
\e^{-i\tau H}(H+i)^{-1}2H(H-i)^{-1}\e^{i\tau H}P_{\rm c}(H)
=2H\langle H\rangle^{-2}P_{\rm c}(H).
$$
Therefore, it follows from Lemma \ref{lemma_prop_H} that
$$
D=\slim_{t\to\infty}D_t=2H\langle H\rangle^{-2}P_{\rm c}(H)
$$
and that $H|_{\ker(D)^\perp}$ has purely continuous spectrum. Moreover, Theorem
\ref{thm_decay}(a) applies for the net $(U_j)_{j\in J}=(\e^{-itH})_{t>0}$, the set
$(\ell_j)_{j\in J}=(t)_{t>0}$, the operator $\widetilde A=(H+i)^{-1}A(H-i)^{-1}$, and
$D$ as above. Therefore, for each $\varphi=D\widetilde\varphi\in D\dom(A)$ and
$\psi\in\dom(A)$ there exists a constant $c_{\varphi,\psi}\ge0$ such that
\begin{equation}\label{eq_decay_Schrod}
\big|\langle\varphi,\e^{-itH}\psi\rangle_\H\big|
\le\|(D-D_t)\widetilde\varphi\|_\H\;\!\|\psi\|_\H+\tfrac1t\;\!c_{\varphi,\psi},
\quad t>0.
\end{equation}

In this example, one cannot easily improve the decay estimate \eqref{eq_decay_Schrod}
when $V\ne0$. Indeed, in order to establish the convergence
$D_t\stackrel{\rm s}\to D$, we used Lemma \ref{lemma_prop_H}(b). But the proof of
Lemma \ref{lemma_prop_H}(b) relies on the RAGE theorem for self-adjoint operators,
whose proof relies in turn on Wiener's theorem. And as sad as it is, in general one
cannot infer an explicit rate of convergence from Wiener's theorem.

On the other hand, if $V=0$, then we have $[iH,A]=2H$, and thus the stronger decay
estimates of Proposition \ref{prop_f(H)}(c) are satisfied.

\subsection{Dirac operator in $\R^3$}\label{sec_Dirac}

This example is motivated by \cite[Ex.~7.7]{GLS_2016} and \cite[Sec.~7.3]{RT12_0}.
Consider in the Hilbert space $\H:=\ltwo(\R^3,\C^4)$ the Dirac operator for a
spin-$\tfrac12$ particle of mass $m>0$
$$
H\varphi:=(\alpha\cdot P+\beta m)\varphi,\quad\varphi\in\SS(\R^3,\C^4),
$$
with $\alpha:=(\alpha_1,\alpha_2,\alpha_3)$ and $\beta$ the usual $4\times4$ Dirac
matrices. Then $H$ is essentially self-adjoint (with closure denoted by the same
symbol), and we have the following result:\footnote{In \cite[Ex.~7.7]{GLS_2016}, the
authors say that it is shown in \cite{RT12_0} that $(H-i)^{-1}\in C^1(X_{\rm NW})$
with
\begin{equation}\label{eq_wrong}
[iH,X_{\rm NW}]=\sqrt{H^2-m^2}\;\!H^{-1}.
\end{equation}
This is not what is written in \cite[Sec.~7.3]{RT12_0}, and it cannot be since
$X_{\rm NW}$ on the l.h.s. of \eqref{eq_wrong} is a vector operator with three
components whereas $\sqrt{H^2-m^2}H^{-1}$ on the r.h.s. of \eqref{eq_wrong} is a
scalar operator with one component.}

\begin{Lemma}\label{lemma_dirac}
Let $X_{\rm NW}:=(X_{\rm NW,1},X_{\rm NW,2},X_{\rm NW,3})$ be the Newton-Wigner
position operator. Then the operator
$$
A\varphi:=\tfrac12(X_{\rm NW}\cdot PH^{-1}+PH^{-1}\cdot X_{\rm NW})\varphi,
\quad\varphi\in \SS(\R^3,\C^4),
$$
is essentially self-adjoint (with closure denoted by the same symbol) and
$(H-i)^{-1}\in C^1(A)$ with $[iH,A]=(H^2-m^2)H^{-2}$.
\end{Lemma}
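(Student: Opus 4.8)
The plan is to work in the momentum representation, where the Dirac operator becomes a multiplication operator and the Newton-Wigner position operator becomes (essentially) the differentiation operator $i\nabla_p$ conjugated by the Foldy-Wouthuysen transform. First I would recall that the Foldy-Wouthuysen transform $\UU_{\rm FW}$ is a unitary operator on $\ltwo(\R^3,\C^4)$ that diagonalises $H$ into $\UU_{\rm FW}H\UU_{\rm FW}^{-1}=\beta\sqrt{P^2+m^2}=:\beta\,\omega(P)$ with $\omega(p):=\sqrt{p^2+m^2}$, and that by definition the Newton-Wigner position operator satisfies $\UU_{\rm FW}X_{\rm NW}\UU_{\rm FW}^{-1}=Q$, the ordinary position operator. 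Under this transform the candidate generator $A$ becomes $\UU_{\rm FW}A\UU_{\rm FW}^{-1}=\tfrac12\big(Q\cdot P\,(\beta\omega(P))^{-1}+(\beta\omega(P))^{-1}P\cdot Q\big)$, which on $\SS(\R^3,\C^4)$ is a first-order differential operator with smooth symbol. Essential self-adjointness of $A$ on $\SS(\R^3,\C^4)$ then follows from essential self-adjointness of its Foldy-Wouthuysen image, which is a symmetric dilation-type operator on the Schwartz space; I would either invoke Nelson's commutator theorem with comparison operator $\langle P\rangle$ or directly cite the self-adjointness established in \cite{RT12_0}.

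Next I would verify the $C^1(A)$ regularity and compute the commutator. The cleanest route is to carry out the whole computation after conjugating by $\UU_{\rm FW}$, since then $H$ is diagonal and all objects are multiplication or first-order differential operators with smooth, polynomially-bounded symbols. Writing $h(p):=\beta\omega(p)$ and $a:=\tfrac12(Q\cdot P h(P)^{-1}+h(P)^{-1}P\cdot Q)$, the heuristic commutator $[ih,a]$ is evaluated by the standard rule that for a multiplication operator $h(P)$ and the (anti)dilation generator built from $Q$, one gets $[ih(P),\tfrac12(Q\cdot\nabla_p\text{-type operator})]=P\cdot(\nabla_p\omega)(P)\,\omega(P)^{-1}\cdot(\text{scalar})$ up to the matrix structure; the key elementary identity is $p\cdot(\nabla_p\omega)(p)=p^2/\omega(p)=(\omega(p)^2-m^2)/\omega(p)$, so that the commutator collapses to $(\omega(P)^2-m^2)\omega(P)^{-2}=(H^2-m^2)H^{-2}$ after conjugating back. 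To make this rigorous rather than formal, I would check the $C^1(A)$ condition via the criterion recalled in Appendix \ref{sec_comm}: show that the form $\langle A\varphi,(H-i)^{-1}\psi\rangle_\H-\langle(H-i)^{-1}\varphi,A\psi\rangle_\H$ extends to a bounded form given by $(H-i)^{-1}(H^2-m^2)H^{-2}(H-i)^{-1}$, equivalently that $(H-i)^{-1}\in C^1(A)$ with the stated commutator.

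The main obstacle I expect is the rigorous justification of the commutator computation at the level of operators rather than symbols, i.e.\ controlling domain questions so that the formal manipulations on $\SS(\R^3,\C^4)$ actually establish membership in $C^1(A)$. The subtlety is that $A$ contains the unbounded factor $Q$, so one must confirm that $\SS(\R^3,\C^4)$ is a core left invariant (after conjugation) in a suitable sense and that the commutator form is bounded after sandwiching with resolvents; the smoothness and boundedness of the symbol $(\omega(p)^2-m^2)\omega(p)^{-2}=p^2/(p^2+m^2)$, which is bounded by $1$, is what ultimately makes $[iH,A]=(H^2-m^2)H^{-2}$ a bounded operator and closes the argument. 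A secondary technical point is the precise identification of the Newton-Wigner operator and the verification that the expression for $A$ above coincides with the one in \cite{RT12_0}; here I would simply align notation with that reference and quote its self-adjointness statement, relegating the symbol-level computation of the commutator to the momentum representation as sketched.
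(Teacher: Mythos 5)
Your proposal is correct, and the essential self-adjointness part coincides with the paper's argument: both conjugate $A$ by the Foldy--Wouthuysen transform (which preserves $\SS(\R^3,\C^4)$), reduce to a direct sum of dilation-type operators $\pm\big(Q\cdot P(P^2+m^2)^{-1/2}+P(P^2+m^2)^{-1/2}\cdot Q\big)$, and invoke Nelson's lemma. For the commutator, however, you take a genuinely different route. You propose a direct symbol-level computation in the momentum representation after Foldy--Wouthuysen conjugation, using $[ig(P),Q_j]=(\partial_j g)(P)$ and the identity $p\cdot(\nabla_p\omega)(p)=p^2/\omega(p)$ to collapse everything to $P^2\omega(P)^{-2}=(H^2-m^2)H^{-2}$; the paper instead stays in the original representation, quotes from \cite{RT12_0} that $(H-i)^{-1}\in C^1(X_{\rm NW,j})$ with $[iH,X_{\rm NW,j}]=P_jH^{-1}$, and then applies the Leibniz rule to the product structure of $A$, yielding
$$
[(H-i)^{-1},A]
=\tfrac i2(H-i)^{-1}\sum_{j=1}^3\big(P_jH^{-1}P_jH^{-1}+P_jH^{-1}P_jH^{-1}\big)(H-i)^{-1}
=i(H-i)^{-1}(H^2-m^2)H^{-2}(H-i)^{-1}
$$
as a form on the core $\SS(\R^3,\C^4)$, whence the $C^1(A)$ membership. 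The paper's route buys a cleaner treatment of exactly the domain issues you flag as your main obstacle: the unbounded factors are handled once and for all by the already-established $C^1(X_{\rm NW,j})$ regularity of the resolvent, so only a bounded-form identity on a core for $A$ remains to be checked. Your route is self-contained and avoids the external commutator relations, but you would still need to justify the pseudodifferential manipulations rigorously (invariance of the Schwartz space, boundedness of the resulting symbol $p^2/(p^2+m^2)$, and the passage from the form identity on a core to the $C^1(A)$ criterion of Appendix \ref{sec_comm}); since all symbols involved are smooth and polynomially bounded, this is routine, and I see no gap in principle.
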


\begin{proof}
First, we recall that $X_{\rm NW}=\F_{\rm FW}^{-1}Q\F_{\rm FW}$, with $\F_{\rm FW}$
the (unitary) Foldy-Wouthuysen transform, and that $\F_{\rm FW}$ leaves
$\SS(\R^3,\C^4)$ invariant \cite[Sec.~1.4.3]{Tha_1992}. Therefore, one gets on
$\SS(\R^3,\C^4)$ the equalities
\begin{align*}
A&=\tfrac12(X_{\rm NW}\cdot PH^{-1}+PH^{-1}\cdot X_{\rm NW})\\
&=\tfrac12\F_{\rm FW}^{-1}(Q\cdot\F_{\rm FW}PH^{-1}\F_{\rm FW}^{-1}
+\F_{\rm FW}PH^{-1}\F_{\rm FW}^{-1}\cdot Q)\F_{\rm FW}\\
&=\tfrac12\F_{\rm FW}^{-1}(Q\cdot P\beta|H|^{-1}+P\beta|H|^{-1}\cdot Q)\F_{\rm FW}
\end{align*}
with the operator within the parenthesis equal to a direct sum of operators of the
form
$$
\pm\big(Q\cdot P(P^2+m^2)^{-1/2}+P(P^2+m^2)^{-1/2}\cdot Q\big).
$$
Since each of these operators is essentially self-adjoint on $\SS(\R^3)$ due to
Nelson's lemma, one infers that $A$ is essentially self-adjoint on $\SS(\R^3,\C^4)$.

For the second claim, we recall from \cite[Sec.~7.3]{RT12_0} that
$(H-i)^{-1}\in C^1(X_{\rm NW,j})$ with $[iH,X_{\rm NW,j}]=P_jH^{-1}$. Therefore, a
calculation in the form sense on $\SS(\R^3,\C^4)$ gives
\begin{align*}
[(H-i)^{-1},A]
&=\tfrac12\sum_{j=1}^3\big([(H-i)^{-1},X_{\rm NW,j}]P_jH^{-1}
+P_jH^{-1}[(H-i)^{-1},X_{\rm NW,j}]\big)\\
&=\tfrac i2(H-i)^{-1}\sum_{j=1}^3\big([iH,X_{\rm NW,j}]P_jH^{-1}
+P_jH^{-1}[iH,X_{\rm NW,j}]\big)(H-i)^{-1}\\
&=i(H-i)^{-1}(H^2-m^2)H^{-2}(H-i)^{-1}.
\end{align*}
Since $\SS(\R^3,\C^4)$ is a core for $A$, this implies that $(H-i)^{-1}\in C^1(A)$
with $[iH,A]=(H^2-m^2)H^{-2}$.
\end{proof}

Now, take any function $f\in C^\infty(\R)$ such that $f(x)=(x^2-m^2)x^{-2}$ if
$|x|\ge m$. Then Remark \ref{rem_other_f} and Lemma \ref{lemma_dirac} imply that
$(H-i)^{-1}\in C^1(A)$ with $[iH,A]=f(H)$, and
$g(H)=f(H)\langle H\rangle^{-2}\in C^1(A)$. Thus $g(H)\dom(A)\subset\dom(A)$ and
Proposition \ref{prop_f(H)} applies. Since an application of the Foldy-Wouthuysen
transform shows that $H$ has spectrum equal to $(-\infty,-m]\cup[m,\infty)$ and
$\ker(f(H))=\{0\}$, it follows that $H$ has purely a.c. spectrum equal to
$(-\infty,-m]\cup[m,\infty)$ (which is standard knowledge) and for each
$\varphi\in g(H)^n\dom(A^n)$ and $\psi\in\dom(A^n)$ ($n\in\N^*$) there exists a
constant $c_{\varphi,\psi}\ge0$ such that
$$
\big|\langle\varphi,\e^{-itH}\psi\rangle_\H\big|
\le\tfrac1{t^n}\;\!c_{\varphi,\psi},\quad t>0.
$$

\subsection{Quantum waveguides in $\R^n$}\label{sec_guides}

This example is motivated by \cite{Tie_2006}. Let $\Sigma$ be a bounded open connected
set in $\R^{n-1}$ ($n\ge2$), let $\Omega:=\Sigma\times\R$ be the corresponding
waveguide with coordinates $x\equiv(\omega,x_n)$, let $-\Delta^\Sigma_{\rm D}$ be the
Dirichlet Laplacian in $\ltwo(\Sigma)$, let
$$
H_0:=1\otimes P_n^2+(-\Delta^\Sigma_{\rm D})\otimes1
\quad\hbox{in}\quad\H:=\ltwo(\Omega)\simeq\ltwo(\Sigma)\otimes\ltwo(\R),
$$
and let $H:=H_0+V$ with $V\in\linf(\Omega,\R)$ satisfying
$$
(1\otimes\langle Q_n\rangle^{1+\varepsilon})V\in\B(\H)
\quad\hbox{and}\quad
(1\otimes\langle Q_n\rangle^{1+\varepsilon})(\partial_nV)\in\B(\H)
\quad\hbox{for some $\varepsilon>0$.}
$$
Then $V\in\K(\dom(H_0),\H)$. Indeed, if we use the strongly commuting self-adjoint
operators $X_1:=1\otimes\langle P_n\rangle^2$ and
$X_2:=(-\Delta^\Sigma_{\rm D}+1)\otimes1$, we get
$$
V(H_0+2)^{-1}
=(1\otimes\langle Q_n\rangle)V
\cdot(-\Delta^\Sigma_{\rm D}+1)^{-1/2}
\otimes\langle Q_n\rangle^{-1}\langle P_n\rangle^{-1}
\cdot X_1^{1/2}X_2^{1/2}(X_1+X_2)^{-1}.
$$
But $(1\otimes\langle Q_n\rangle)V\in\B(\H)$ by assumption,
$X_1^{1/2}X_2^{1/2}(X_1+X_2)^{-1}\in\B(\H)$ due to the bound
$$
\big\|X_1^{1/2}X_2^{1/2}(X_1+X_2)^{-1}\big\|_{\B(\H)}
\le\sup_{x_1,x_2\ge1}(x_1x_2)^{1/2}(x_1+x_2)^{-1}
<\infty,
$$
and
$(-\Delta^\Sigma_{\rm D}+1)^{-1/2}
\otimes\langle Q_n\rangle^{-1}\langle P_n\rangle^{-1}\in\K(\H)
$
because $-\Delta^\Sigma_{\rm D}$ has compact resolvent and
$\langle Q_n\rangle^{-1}\langle P_n\rangle^{-1}$ is compact. Hence
$V(H_0+2)^{-1}\in\K(\H)$, and thus $V\in\K(\dom(H_0),\H)$. Therefore, the assumptions
of \cite{Tie_2006} are satisfied and the following holds true: The operator $H$ has no
singular continuous spectrum, the eigenvalues of $H$ (if any) are of finite
multiplicity and can accumulate only at the set of eigenvalues of
$-\Delta^\Sigma_{\rm D}$, and the wave operators
$W_\pm:=\slim_{t\to\pm\infty}\e^{itH}\e^{-itH_0}$ exist and are complete.

Now, set $A:=1\otimes A_n$ with $A_n$ the generator of dilations in $\ltwo(\R)$. Then
$(H-i)^{-1}\in C^1(A)$ with
$$
[iH,A]=2(1\otimes P_n^2)-(1\otimes Q_n)(\partial_nV),
$$
and an argument as above shows that the following three operators belong to $\K(\H)$:
\begin{align*}
&(H+i)^{-1}(1\otimes Q_n)(\partial_nV)(H-i)^{-1},\\
&\big((H+i)^{-1}-(H_0+i)^{-1}\big)(1\otimes P_n^2)(H-i)^{-1},\\
&(H_0+i)^{-1}(1\otimes P_n^2)\big((H-i)^{-1}-(H_0-i)^{-1}\big).
\end{align*}  
Furthermore, since $P_{\rm c}(H)=P_{\rm ac}(H)$, one has
\begin{align*}
&\slim_{\tau\to\infty}
\e^{-i\tau H}(H_0+i)^{-1}2(1\otimes P_n^2)(H_0-i)^{-1}\e^{i\tau H}P_{\rm c}(H)\\
&=2\;\!\slim_{\tau\to\infty}\e^{-i\tau H}\e^{i\tau H_0}(1\otimes P_n^2)
\langle H_0\rangle^{-2}\e^{-i\tau H_0}\e^{i\tau H}P_{\rm ac}(H)\\
&=2W_-(1\otimes P_n^2)\langle H_0\rangle^{-2}W_-^*.
\end{align*}
And since strong convergence implies strong Cesaro convergence, this implies that
\begin{align*}
&\slim_{t\to\infty}\tfrac1t\int_0^t\d\tau\,
\e^{-i\tau H}(H_0+i)^{-1}2(1\otimes P_n^2)(H_0-i)^{-1}\e^{i\tau H}P_{\rm c}(H)\\
&=2W_-(1\otimes P_n^2)\langle H_0\rangle^{-2}W_-^*.
\end{align*}
Thus it follows from Lemma \ref{lemma_prop_H}(b) that
$$
D=\slim_{t\to\infty}D_t=2W_-(1\otimes P_n^2)\langle H_0\rangle^{-2}W_-^*.
$$
So, Theorem \ref{thm_decay}(a) applies for the net $(U_j)_{j\in J}=(\e^{-itH})_{t>0}$,
the set $(\ell_j)_{j\in J}=(t)_{t>0}$, the operator
$\widetilde A=(H+i)^{-1}A(H-i)^{-1}$, and $D$ as above. Therefore, for each
$\varphi=D\widetilde\varphi\in D\dom(A)$ and $\psi\in\dom(A)$ there exists a constant
$c_{\varphi,\psi}\ge0$ such that
$$
\big|\langle\varphi,\e^{-itH}\psi\rangle_\H\big|
\le\|(D-D_t)\widetilde\varphi\|_\H\;\!\|\psi\|_\H+\tfrac1t\;\!c_{\varphi,\psi},
\quad t>0.
$$

Of course, one obtains stronger decay estimates when $V=0$. Indeed, one gets in such a
case $[iH,A]=2(1\otimes P_n^2)$ and $D=D_t=2(1\otimes P_n^2)\langle H\rangle^{-2}$ for
all $t>0$, from which one can infer the estimates of Theorem \ref{thm_decay}(c).

\subsection{Stark Hamiltonian in $\R^n$}\label{sec_Stark}

This example is motivated by \cite[Sec.~7.3]{GLS_2016} and \cite[Sec.~7.1]{RT12_0}.
Let $\H:=\ltwo(\R^n)$ and take a unit vector $v\in\R^n$. Then the Stark Hamiltonian
$H$ with electric field along $v$ and the operator $A$ given by
$$
(H\varphi)(x):=-(\Delta\varphi)(x)+(v\cdot x)\varphi(x)
\quad\hbox{and}\quad
A\varphi:=i(v\cdot\nabla)\varphi,\quad\varphi\in\SS(\R^n),~x\in\R^n.
$$
are essentielly self-adjoint (with closures denoted by the same symbols), and
$(H-i)^{-1}\in C^\infty(A)$ with $[iH,A]=1$. Thus Proposition \ref{prop_f(H)} applies
with $f(H)=1$, $g(H)=\langle H\rangle^{-2}\in C^\infty(A)$ and $\ker(f(H))=\{0\}$. It
follows that $H$ has purely a.c. spectrum (which is standard knowledge) and for each
$\varphi\in\langle H\rangle^{-2n}\dom(A^n)$ and $\psi\in\dom(A^n)$ ($n\in\N^*$) there
exists a constant $c_{\varphi,\psi}\ge0$ such that
$$
\big|\langle\varphi,\e^{-itH}\psi\rangle_\H\big|
\le\tfrac1{t^n}\;\!c_{\varphi,\psi},\quad t>0.
$$
This estimate is similar to the result of \cite[Sec.~7.3]{GLS_2016} (see
\cite[Thm.~6.1]{GLS_2016}).

\subsection{Fractional Laplacian in $\R^n$}\label{sec_frac}

This example is motivated by \cite[Sec.~7.5]{GLS_2016}. For $s\in(0,2)$, let
$H:=(-\Delta)^{s/2}$ be the fractional Laplacian in the Hilbert space
$\H:=\ltwo(\R^n)$ and let $A$ be the generator of dilations in $\H$. Then
$(H-i)^{-1}\in C^\infty(A)$ with $[iH,A]=sH$. Thus Proposition \ref{prop_f(H)} applies
with $f(H)=sH$, $g(H)=sH\langle H\rangle^{-2}\in C^\infty(A)$ and
$\ker(f(H))=\ker(H)$. Since an application of the Fourier transform shows that $H$ has
spectrum equal to $[0,\infty)$ and $\ker(H)=\{0\}$, it follows that $H$ has purely
a.c. spectrum equal to $[0,\infty)$ (which is standard knowledge) and for each
$\varphi\in g(H)^n\dom(A^n)$ and $\psi\in\dom(A^n)$ ($n\in\N^*$) there exists a
constant $c_{\varphi,\psi}\ge0$ such that
$$
\big|\langle\varphi,\e^{-itH}\psi\rangle_\H\big|
\le\tfrac1{t^n}\;\!c_{\varphi,\psi},\quad t>0.
$$
This estimate improves the result of \cite[Sec.~7.5]{GLS_2016} (see
\cite[Thm.~6.3]{GLS_2016}).

\subsection{Horocycle flow}\label{sec_horo}

This example is motivated by \cite{Tie_2012,Tie_2017}, but see also
\cite{Tie_2015,Tie_2017_2}. Let $\Sigma$ be a  finite volume Riemann surface of genus
$\ge2$ and let $M:=T^1\Sigma$ be the unit tangent bundle of $\Sigma$. The $3$-manifold
$M$ carries a probability measure $\mu_\Omega$, induced by a canonical volume form
$\Omega$, which is preserved by two distinguished one-parameter groups of
diffeomorphisms\;\!: the horocycle flow $F_1:=(F_{1,t})_{t\in\R}$ and the geodesic
flow $F_2:=(F_{2,t})_{t\in\R}$. Each flow admits a self-adjoint generator $H_j$ in
$\H:=\ltwo(M,\mu_\Omega)$ essentially self-adjoint on $C^\infty_{\rm c}(M)$ and given
by
$$
H_j\varphi:=-i\L_{X_j}\varphi,\quad\varphi\in C^\infty_{\rm c}(M),
$$
with $X_j$ the divergence-free vector field associated to $F_j$ and $\L_{X_j}$ the
corresponding Lie derivative. Moreover, one has $(H_1-i)^{-1}\in C^\infty(H_2)$ with
$[iH_1,H_2]=H_1$, see \cite[Sec.~3]{Tie_2012}. Thus Proposition \ref{prop_f(H)}
applies with $f(H_1)=H_1$, $g(H_1)=H_1\langle H_1\rangle^{-2}\in C^\infty(H_2)$ and
$\ker(f(H_1))=\ker(H_1)$. It follows that $H_1|_{\ker(H_1)^\perp}$ has purely a.c.
spectrum (which is standard knowledge) and for each $\varphi\in g(H_1)^n\dom(H_2^n)$
and $\psi\in\dom(H_2^n)$ ($n\in\N^*$) there exists a constant $c_{\varphi,\psi}\ge0$
such that
$$
\big|\langle\varphi,\psi\circ F_{1,t}\rangle_{\H}\big|
=\big|\langle\varphi,\e^{-itH_1}\psi\rangle_\H\big|
\le\tfrac1{t^n}\;\!c_{\varphi,\psi},\quad t>0.
$$

Roughly, this estimate shows that if the vectors $\varphi,\psi$ are $n$ times
differentiable along the geodesic flow, then their correlation along the horocycle
flow decays as $\tfrac1{t^n}$. It thus provides a new version of the polynomial decay
of correlations for the horocycle flow on the unit tangent bundle of a finite volume
surface of constant negative curvature. And its (rather short) proof did not use the
identification of $M$ with a homogeneous space $\Gamma\setminus{\rm PSL}(2,\R)$ and
the representation theory associated to it as is customary (see for instance
\cite{Moo_1987}).

Now, assume that $\Sigma$ is compact and consider a $C^1$ vector field proportional to
$X_1$, that is, $fX_1$ with $f\in C^1(M,(0,\infty))$. The vector field $fX_1$ admits a
complete flow $\widetilde F_1:=(\widetilde F_{1,t})_{t\in\R}$ uniquely ergodic with
respect to the measure
$\widetilde\mu_\Omega:=\frac{f^{-1}\mu_\Omega}{\int_Mf^{-1}\d\mu_\Omega}$ and with
generator $H:=fH_1$ essentially self-adjoint on
$C^1(M)\subset\H:=\ltwo(M,f^{-1}\mu_\Omega)$. Furthermore, the following holds true
\cite[Ex.~4.8]{Tie_2017}: The operator $A:=f^{1/2}H_2f^{-1/2}$ is essentially
self-adjoint on $C^1(M)\subset\H$, one has $(H-i)^{-1}\in C^1(A)$ with
$$
[iH,A]=H\xi+\xi H,\quad\xi:=\tfrac12-\tfrac12f^{-1}\L_{X_2}(f),
$$
and $D=\slim_{t\to\infty}D_t=H\langle H\rangle^{-2}$. Thus Theorem \ref{thm_decay}(a)
applies for the net $(U_j)_{j\in J}=(\e^{-itH})_{t>0}$, the set
$(\ell_j)_{j\in J}=(t)_{t>0}$, the operator $\widetilde A=(H+i)^{-1}A(H-i)^{-1}$, and
$D=H\langle H\rangle^{-2}$. Therefore, for each
$\varphi=D\widetilde\varphi\in D\dom(A)$ and $\psi\in\dom(A)$ there exists a constant
$c_{\varphi,\psi}\ge0$ such that
\begin{equation}\label{eq_decay_horo}
\big|\langle\varphi,\psi\circ\widetilde F_{1,t}
\rangle_{\H}\big|
=\big|\langle\varphi,\e^{-itH}\psi\rangle_\H\big|
\le\|(D-D_t)\widetilde\varphi\|_\H\;\!\|\psi\|_\H+\tfrac1t\;\!c_{\varphi,\psi},
\quad t>0.
\end{equation}

In this case, one cannot easily improve the decay estimate \eqref{eq_decay_horo} with
the tools of this paper. Indeed, in order to establish the convergence
$D_t\stackrel{\rm s}\to D$, one uses the ergodic theorem for uniquely ergodic flows
(see \cite{Tie_2017}) which does not come with an explicit rate of convergence. We
refer to \cite[Thm.~19]{FU12} for a more quantitative decay estimate obtained using
the representation theory of ${\rm SL}(2,\R)$.

\subsection{Adjacency matrices}\label{sec_graphs}

This example is motivated by \cite{MRT_2007} and \cite[Ex.~4.7]{Tie_2017}. Let
$(X,\sim)$ be a graph of finite degree, with symmetric relation $\sim$, and with no
multiple edges or loops. Then the adjacency matrix of $(X,\sim)$ is the bounded
self-adjoint operator in the Hilbert space $\H:=\ell^2(X)$ given by
$$
(H\varphi)(x):=\sum_{y\sim x}\varphi(y),\quad\varphi\in\H,~x\in X.
$$
A directed graph $(X,<)$ subjacent to $(X,\sim)$ is the graph $(X,\sim)$ together with
a relation $<$ on $X$ such that, for each $x,y\in X$, $x\sim y$ is equivalent to $x<y$
or $y<x$ (and one cannot have both $x<y$ and $y<x$). When using drawings, we draw an
arrow $x\leftarrow y$ if $x<y$. For each $x\in X$, we define the sets
$N^-(x):=\{y\in X\mid x<y\}$ and $N^+(x):=\{y\in X\mid y<x\}$. Then a directed graph
$(X,<)$ is called admissible if \cite[Def.~5.1]{MRT_2007}:
\begin{enumerate}
\item[(i)] each closed path in $X$ has the same number of positively and negatively
oriented edges,
\item[(ii)] for each $x,y\in X$, one has
$\#\{N^-(x)\cap N^-(y)\}=\#\{N^+(x)\cap N^+(y)\}$.
\end{enumerate}

\begin{figure}[h]
\centering
\includegraphics[width=340pt]{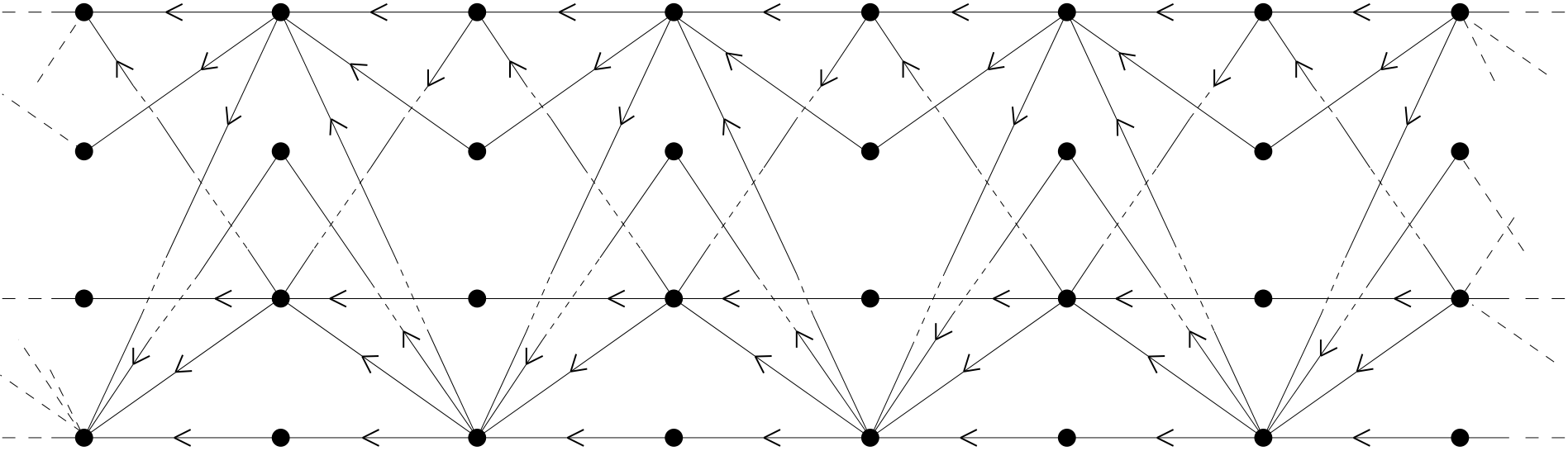}
\caption{Example of admissible graph}
\end{figure}

In the case of admissible graphs, it is shown in \cite[Secs.~3-5]{MRT_2007} that there
exist a self-adjoint operator $A$ and a bounded self-adjoint operator $K$ in $\H$ such
that $H\in C^1(A)$ with $[iH,A]=K^2$, $K^2\in C^1(A)$ with $[iK^2,A]=-2HK^2$, and
$[K,H]=0$. Thus Lemma \ref{lemma_prop_H} applies with
$D=D_t=K^2\langle H\rangle^{-2}\in C^1(A)$ for all $t>0$ and $\ker(D)=\ker(K)$. So
$H|_{\ker(K)^\perp}$ has purely a.c. spectrum, as was first proved in
\cite[Thm.~1.1]{MRT_2007} with a more complicated method.

Now, since $H,K^2\in C^1(A)$, we have $(H\pm i)^{-1}\dom(A)\subset\dom(A)$ and
$K^2\dom(A)\subset\dom(A)$. Thus a calculation using the above commutation relations
and the operator $\widetilde A:=(H+i)^{-1}A(H-i)^{-1}$ gives on $\dom(A)$:
\begin{align*}
[\widetilde A,D]
&=[(H+i)^{-1}A(H-i)^{-1},K^2\langle H\rangle^{-2}]\\
&=K^2(H+i)^{-1}[A,\langle H\rangle^{-2}](H-i)^{-1}
+(H+i)^{-1}[A,K^2](H-i)^{-1}\langle H\rangle^{-2}\\
&=K^2(H+i)^{-1}(-2iK^2H\langle H\rangle^{-4})(H-i)^{-1}
+(H+i)^{-1}(-2iHK^2)(H-i)^{-1}\langle H\rangle^{-2}\\
&=-2iK^2H\langle H\rangle^{-4}(K^2\langle H\rangle^{-2}+1)\\
&=-2iDH\langle H\rangle^{-2}(D+1).
\end{align*}
Since $DH\langle H\rangle^{-2}(D+1)\in\B(\H)$ and $\dom(A)$ is a core for
$\widetilde A$, we infer that $D\in C^1(\widetilde A)$ with
$[\widetilde A,D]=-2iDH\langle H\rangle^{-2}(D+1)$. So, in the present case, the
operator $B\in\B(\H)$ appearing in the statement of Theorem \ref{thm_decay}(c) is
$B=-2iH\langle H\rangle^{-2}(D+1)$ and it satisfies $[D,B]=0$. Furthermore, a repeated
use of the information gathered so far shows that each factor appearing in the
expression for $B$ belongs to $C^{(n-1)}(\widetilde A)$ for any $n\in\N^*$. Thus
$B\in C^{(n-1)}(\widetilde A)$ for any $n\in\N^*$. It follows that the assumptions of
Theorem \ref{thm_decay}(c) are satisfied for the net
$(U_j)_{j\in J}=(\e^{-itH})_{t>0}$, the set $(\ell_j)_{j\in J}=(t)_{t>0}$, the
operator $\widetilde A$, and $D=K^2\langle H\rangle^{-2}$. Thus for each
$\varphi\in D^n\dom(A^n)$ and $\psi\in\dom(A^n)$ ($n\in\N^*$) there exists a constant
$c_{\varphi,\psi}\ge0$ such that
$$
\big|\langle\varphi,\e^{-itH}\psi\rangle_\H\big|
\le\tfrac1{t^n}\;\!c_{\varphi,\psi},\quad t>0.
$$

\subsection{Jacobi matrices}\label{sec_Jacobi}

This example is motivated by \cite{Sah_2008}. Let $\H:=\ell^2(\N^*)$, let
$\ell^2_0\subset\H$ be the set of functions $\N^*\to\C$ with compact support, and let
$H$ be the Jacobi matrix
$$
(H\varphi)(n):=a_{n-1}\varphi(n-1)+b_n\varphi(n)+a_n\varphi(n+1),
\quad\varphi\in\ell^2_0,~n\in\N^*,~\varphi(0):=0,
$$
with coefficients
$$
a_n:=n^\alpha+\eta_n,\quad b_n:=\lambda(n^\alpha+(n-1)^\alpha)+\beta_n,
\quad\alpha>0,~\lambda,\eta_n,\beta_n\in\R.
$$
For any sequence $(r_n)_{n\in\N^*}$ and $k\in\N^*$, define by induction
$(\partial^{k+1}r)_n:=(\partial\partial^kr)_n$ with $(\partial r)_n:=r_{n+1}-r_n$, set
$$
\beta_n':=\beta_n-\lambda(\eta_n+\eta_{n-1}),\quad n\in\N^*,~\eta_0:=0,
$$
and assume the following:

\begin{Assumption}
Suppose that $|\lambda|<1$, $\alpha\in(0,1)$, and assume that the sequences with
elements
$$
\tfrac{\eta_n}{n^\alpha},\quad
(\partial\eta)_n,\quad
n^{1-2\alpha}(\partial\eta)_n,\quad
n^{1-\alpha}(\partial^2\eta)_n,\quad
\beta_n',\quad
n^{1-\alpha}(\partial\beta')_n,\quad
(n\in\N^*)
$$
vanish as $n\to\infty$.
\end{Assumption}

Then it is shown in \cite[Thms~1.1-1.4]{Sah_2008} that $H$ is essentially self-adjoint
(with closure denoted by the same symbol), that $\sigma_{\rm ess}(H)=\R$, and that the
eigenvalues of $H$ (if any) are of finite multiplicity and can accumulate only at
$\pm\infty$. Furthermore, it is shown in the proof of \cite[Cor.~8.1]{Sah_2008} that
there exists a self-adjoint operator $A$ in $\H$ such that $(H-i)^{-1}\in C^1(A)$ with
$$
[iH,A]=4(1-\lambda^2)(1-\alpha)+K,\quad K\in\K(\dom(H),\H).
$$
Since $(H+i)^{-1}K(H-i)^{-1}\in\K(\H)$ and
\begin{align*}
&\slim_{t\to\infty}\tfrac1t\int_0^t\d\tau\,
\e^{-i\tau H}(H+i)^{-1}4(1-\lambda^2)(1-\alpha)(H-i)^{-1}\e^{i\tau H}P_{\rm c}(H)\\
&=4(1-\lambda^2)(1-\alpha)\langle H\rangle^{-2}P_{\rm c}(H),
\end{align*}
it follows from Lemma \ref{lemma_prop_H} that
$$
D=\slim_{t\to\infty}D_t=4(1-\lambda^2)(1-\alpha)\langle H\rangle^{-2}P_{\rm c}(H)
$$
and that $H|_{\ker(D)^\perp}=H|_{P_{\rm c}(H)\H}$ has (trivially) purely continuous
spectrum. Moreover, Theorem \ref{thm_decay}(a) applies for the net
$(U_j)_{j\in J}=(\e^{-itH})_{t>0}$, the set $(\ell_j)_{j\in J}=(t)_{t>0}$, the
operator $\widetilde A=(H+i)^{-1}A(H-i)^{-1}$, and $D$ as above. Therefore, for each
$\varphi=D\widetilde\varphi\in D\dom(A)$ and $\psi\in\dom(A)$ there exists a constant
$c_{\varphi,\psi}\ge0$ such that
$$
\big|\langle\varphi,\e^{-itH}\psi\rangle_\H\big|
\le\|(D-D_t)\widetilde\varphi\|_\H\;\!\|\psi\|_\H+\tfrac1t\;\!c_{\varphi,\psi},
\quad t>0.
$$

One can obtain stronger decay estimates in particular cases. For instance, it is shown
in \cite[Appx.~A]{Sah_2008} that for certain choices of $a_n$ and $b_n$ one gets the
relation $[iH,A]=aH+b$ with $(a,b)\in\R^2\setminus\{(0,0)\}$, from which one can infer
the estimates of Theorem \ref{thm_decay}(c).

\subsection{Schrödinger operators on Fock spaces}\label{sec_Fock}

This example is motivated by \cite{GG_2005}. Let $U$ be an isometry in a Hilbert space
$\H$ and let $N$ be a number operator for $U$, that is, a self-adjoint operator in
$\H$ such that $U^*\dom(N)\subset\dom(N)$ and $UNU^*=N-1$ on $\dom(N)$. Number
operators do not always exist, but if $U$ is completely non-unitary, namely if
$\slim_{n\to\infty}(U^*)^n=0$, then $N$ exists and is unique. Next, consider the
bounded self-adjoint operators $H:=\re(U)$ and $S:=\im(U)$, and set
$$
A\varphi:=\tfrac12(SN+NS)\varphi,\quad\varphi\in\dom(N).
$$
Then it is shown in \cite[Secs.~2-3]{GG_2005} that $A$ is essentially self-adjoint
(with closure denoted by the same symbol) and that $H\in C^\infty(A)$ with
$[iH,A]=1-H^2$. Thus Proposition \ref{prop_f(H)} applies with $f(H)=1-H^2$,
$g(H)=(1-H^2)\langle H\rangle^{-2}\in C^\infty(A)$ and $\ker(f(H))=E^H(\{-1,1\})\H$.
It follows that $H$ has purely a.c. spectrum in $(-1,1)$ and for each
$\varphi\in g(H)^n\dom(A^n)$ and $\psi\in\dom(A^n)$ ($n\in\N^*$) there exists a
constant $c_{\varphi,\psi}\ge0$ such that
$$
\big|\langle\varphi,\e^{-itH}\psi\rangle_\H\big|
\le\tfrac1{t^n}\;\!c_{\varphi,\psi},\quad t>0.
$$

Operators like $H$ (and additive perturbations of it) appear naturally in the
framework of Fock spaces and their applications to Schrödinger operators on trees.
Indeed, let $\h$ be a complex Hilbert space and
$\H:=\bigoplus_{n=0}^\infty\h^{\otimes n}$ the complete Fock space associated to it
(with $\h^{\otimes0}:=\C$ and $\h^{\otimes n}:=\{0\}$ if $n<0$). For any $u\in\h$ with
$\|u\|_\h=1$, let $U\in\B(\H)$ act on the sector $\h^{\otimes n}$ as
$$
U(h_1\otimes\dots\otimes h_n):=h_1\otimes\dots\otimes h_n\otimes u,
\quad h_1\otimes\dots\otimes h_n\in\h^{\otimes n}.
$$
Then $U$ is completely non-unitary, the operator $H=\re(U)$ acts on the sector
$\h^{\otimes n}$ as
$$
\begin{cases}
H(h_1\otimes\dots\otimes h_n)
=\tfrac12\big(h_1\otimes\dots\otimes h_n\otimes u 
+h_1\otimes\dots\otimes h_{n-1}\langle h_n,u\rangle_\h\big) & \hbox{if $n\ge1$,}\\
Hh=\tfrac12hu & \hbox{if $h\in\h^{\otimes 0}=\C$,}
\end{cases}
$$
and the number operator $N$ for $U$ can be explicitly described (see
\cite[Sec.~4]{GG_2005} for more details).

\subsection{Multiplication by $\lambda$ in $\ltwo(\R_+,\d\mu)$}\label{sec_lambda}

This example is motivated by \cite[Sec.~7.6]{GLS_2016}. Let $H$ be the maximal
multiplication operator by the variable $\lambda\in\R_+$ in the Hilbert space
$\H:=\ltwo(\R_+,\d\mu)$, with $\d\mu:=h\;\!\d\lambda$ and $h\in C^1(\R_+,\R_+)$, and
assume that the function
$$
q:\R_+\to\R,~~\lambda\mapsto\tfrac{\lambda h'(\lambda)}{h(\lambda)}+1,
$$
is bounded. Then the operator
$$
A\varphi:=-\tfrac i4(2H\varphi'+q\varphi),\quad\varphi\in C^\infty_{\rm c}(\R_+),
$$
is essentially self-adjoint (with closure denoted by the same symbol) and
$(H-i)^{-1}\in C^\infty(A)$ with
$[iH,A]=-\tfrac12H$.\footnote{In \cite[Sec.~7.6]{GLS_2016}, there is a small mistake
in the calculation of the commutator $[iH,A]$. This is why our operator $A$ and
commutator $[iH,A]$ slightly differ from the ones appearing in
\cite[Sec.~7.6]{GLS_2016}.} Thus Proposition \ref{prop_f(H)} applies with
$f(H)=-\tfrac12H$, $g(H)=-\tfrac12H\langle H\rangle^{-2}\in C^\infty(A)$ and
$\ker(f(H))=\ker(H)$. Furthermore, since
$$
\mu\big(\{\lambda\in [0,\infty)\mid\lambda=0\}\big)=0,
$$
we have that $\ker(H)=\{0\}$. It follows that $H$ has purely a.c. spectrum equal to
$[0,\infty)$ and for each $\varphi\in g(H)^n\dom(A^n)$ and $\psi\in\dom(A^n)$
($n\in\N^*$) there exists a constant $c_{\varphi,\psi}\ge0$ such that
$$
\big|\langle\varphi,\e^{-itH}\psi\rangle_\H\big|
\le\tfrac1{t^n}\;\!c_{\varphi,\psi},\quad t>0.
$$
This estimate improves the result of \cite[Sec.~7.6]{GLS_2016} (see
\cite[Thm.~6.3]{GLS_2016}).

\subsection{$H=-\partial_{xx}+\partial_{yy}$ in $\R^2$}\label{sec_partial}

This example is motivated by \cite[Sec.~7.2]{GLS_2016}. Consider in the Hilbert space
$\H:=\ltwo(\R^2)$ the partial differential operator
$$
H\varphi:=(-\partial_{xx}+\partial_{yy})\varphi,\quad\varphi\in\SS(\R^2),
$$
which is essentially self-adjoint (with closure denoted by the same symbol), and $A$
the generator of dilations in $\H$. Then we have $(H-i)^{-1}\in C^\infty(A)$ with
$[iH,A]=2H$. Thus Proposition \ref{prop_f(H)} applies with $f(H)=2H$,
$g(H)=2H\langle H\rangle^{-2}\in C^\infty(A)$ and $\ker(f(H))=\ker(H)$. Since an
application of the Fourier transform shows that $H$ has spectrum equal to $\R$ and
$\ker(H)=\{0\}$, it follows that $H$ has purely a.c. spectrum equal to $\R$, and for
each $\varphi\in g(H)^n\dom(A^n)$ and $\psi\in\dom(A^n)$ ($n\in\N^*$) there exists a
constant $c_{\varphi,\psi}\ge0$ such that
$$
\big|\langle\varphi,\e^{-itH}\psi\rangle_\H\big|
\le\tfrac1{t^n}\;\!c_{\varphi,\psi},\quad t>0.
$$
This estimate improves the result of \cite[Sec.~7.2]{GLS_2016} (see
\cite[Thm.~6.3]{GLS_2016}).

\subsection{$H=-X^{2-s}\Delta-\Delta X^{2-s}$ in $\R_+$}\label{sec_homo}

This example is motivated by \cite[Sec.~7.4]{GLS_2016}. Let $X$ be the maximal
multiplication operator by the variable $x\in\R_+$ in the Hilbert space $\H:=\ltwo(\R_+)$ and
let $A$ be the generator of dilations in $\H$. Then the operator
$$
H\varphi:=(-X^{2-s}\Delta-\Delta X^{2-s})\varphi,
\quad\varphi\in C^\infty_{\rm c}(\R_+),~s\in(0,2),
$$
is essentially self-adjoint (with closure denoted by the same symbol) and
$(H-i)^{-1}\in C^\infty(A)$ with $[iH,A]=sH$. In particular, one obtains that
$\e^{-itA}H=\e^{st}H\e^{-itA}$ for all $t\in\R$ which implies that $\ker(H)=\{0\}$,
since otherwise $\ker(H)$ would be a nontrivial invariant subspace of the irreducible
representation $\R\ni t\mapsto\e^{-itA}\in\U(\H)$. Thus Proposition \ref{prop_f(H)}
applies with $f(H)=sH$, $g(H)=sH\langle H\rangle^{-2}\in C^\infty(A)$ and
$\ker(f(H))=\ker(H)=\{0\}$. It follows that $H$ has purely a.c. spectrum, and for each
$\varphi\in g(H)^n\dom(A^n)$ and $\psi\in\dom(A^n)$ ($n\in\N^*$) there exists a
constant $c_{\varphi,\psi}\ge0$ such that
$$
\big|\langle\varphi,\e^{-itH}\psi\rangle_\H\big|
\le\tfrac1{t^n}\;\!c_{\varphi,\psi},\quad t>0.
$$
This estimate improves the result of \cite[Sec.~7.4]{GLS_2016} (see
\cite[Thm.~6.3]{GLS_2016}).

\subsection{Quantum walks on $\Z$}\label{sec_Z}

This example is motivated by \cite{RST_2018,RST_2019}. Consider a quantum walk on $\Z$
with evolution operator $U:=SC$ in the Hilbert space $\H:=\ell^2(\Z,\C^2)$, where $S$
is the shift operator defined as
$$
(S\varphi)(x)
:=\left(\begin{smallmatrix}
\varphi^{(0)}(x+1)\\
\varphi^{(1)}(x-1)
\end{smallmatrix}\right),
\quad
\varphi
=\left(\begin{smallmatrix}
\varphi^{(0)}\\
\varphi^{(1)}
\end{smallmatrix}\right)\in\H,~x\in\Z,
$$
and $C$ the coin operator defined as
$$
(C\varphi)(x):=C(x)\varphi(x),\quad\varphi\in\H,~x\in\Z,~C(x)\in\U(2).
$$
Assume that $C$ is anisotropic, namely, converging with short-range rate to an
asymptotic coin on the left and to an asymptotic coin on the right:

\begin{Assumption}
There exist $C_\ell,C_{\sf r}\in\U(2)$, $\kappa_\ell,\kappa_{\sf r}>0$, and
$\varepsilon_\ell,\varepsilon_{\sf r}>0$ such that
\begin{align*}
&\|C(x)-C_\ell\|_{\B(\C^2)}\le\kappa_\ell\;\!|x|^{-1-\varepsilon_\ell},\quad x<0,\\
&\|C(x)-C_{\sf r}\|_{\B(\C^2)}\le\kappa_{\sf r}\;\!|x|^{-1-\varepsilon_{\sf r}},
\quad x>0,
\end{align*}
where the indexes $\ell$ and ${\sf r}$ stand for ``left" and ``right".
\end{Assumption}

Under this assumption, it is shown in \cite[Sec.~4]{RST_2018} that $U$ has no singular
continuous spectrum and that the eigenvalues of $U$ (if any) are of finite
multiplicity and can accumulate only at a finite set of threshold values. Furthermore,
there exist a self-adjoint operator $A$ in $\H$, a unitary operator $U_0$ in an
auxiliary Hilbert space $\H_0$, a self-adjoint operator $A_0$ in $\H_0$ and
$J\in\B(\H_0,\H)$ such that $U\in C^1(A)$, $U_0\in C^1(A_0)$ and
$$
[A_0,U_0]U_0^{-1}=V_0^2
\quad\hbox{and}\quad
[A,U]U^{-1}-J[A_0,U_0]U_0^{-1}J^*\in\K(\H)
$$
with $V_0\in\B(\H_0)$ an asymptotic velocity operator for $U_0$ satisfying
$[V_0,U_0]=0$. Since $P_{\rm c}(U)=P_{\rm ac}(U)$, one then infers from
\cite[Sec.~3]{RST_2019} that
$$
\slim_{n\to\infty}U^nJV_0^2J^*U^{-n}P_{\rm c}(U)
=\left(\slim_{n\to\infty}U^nJU_0^{-n}P_{\rm ac}(U_0)\right)
V_0^2\left(\slim_{n\to\infty}U_0^nJ^*U^{-n}P_{\rm ac}(U)\right)
=W_-V_0^2W_-^*
$$
with
\begin{equation}\label{eq_wave}
W_\pm:=\slim_{n\to\pm\infty}U^{-n}JU_0^nP_{\rm ac}(U_0)
\end{equation}
the wave operators for the triple $(U,U_0,J)$. Since strong convergence implies strong
Cesaro convergence, we obtain that
$$
\slim_{n\to\infty}\tfrac1n\sum_{m=0}^{n-1}U^mJV_0^2J^*U^{-m}P_{\rm c}(U)
=W_-V_0^2W_-^*,
$$
and thus it follows from Lemma \ref{lemma_prop_U}(b) that
$$
D=\slim_{n\to\infty}D_n=W_-V_0^2W_-^*.
$$
So, Theorem \ref{thm_decay}(a) applies for the net $(U_j)_{j\in J}=(U^n)_{n\in\N^*}$,
the set $(\ell_j)_{j\in J}=(n)_{n\in\N^*}$, the operator $A$, and $D=W_-V_0^2W_-^*$.
Therefore, for each $\varphi=D\widetilde\varphi\in D\dom(A)$ and $\psi\in\dom(A)$
there exists a constant $c_{\varphi,\psi}\ge0$ such that
\begin{equation}\label{eq_decay_Z}
\big|\langle\varphi,U^n\psi\rangle_\H\big|
\le\|(D-D_n)\widetilde\varphi\|_\H\;\!\|\psi\|_\H+\tfrac1n\;\!c_{\varphi,\psi},
\quad n\in\N^*.
\end{equation}

In this example, one cannot easily improve the decay estimate \eqref{eq_decay_Z}.
Indeed, in order to establish the convergence $D_n\stackrel{\rm s}\to D$, we used
Lemma \ref{lemma_prop_U}(b). But the proof of Lemma \ref{lemma_prop_U}(b) relies on
the RAGE theorem \ref{thm_RAGE}, whose proof relies in turn on a discrete version of
Wiener's theorem. And as in the continuous case, in general one cannot infer an
explicit rate of convergence from Wiener's theorem. In addition, we used the strong
limits \eqref{eq_wave} which do not come with an explicit rate of convergence either.

\subsection{Quantum walks on trees}\label{sec_tree}

This example is motivated by \cite{Tie_2021}. Let $\Tau$ be a homogeneous tree of odd
degree $d\ge3$, that is, a finitely generated group $\Tau$ with generators
$a_1,\dots,a_d$, identity $e$, and presentation
$\Tau:=\langle a_1,\dots,a_d\mid a_1^2=\dots=a_d^2=e\rangle$.
\begin{center}
\includegraphics[width=220pt]{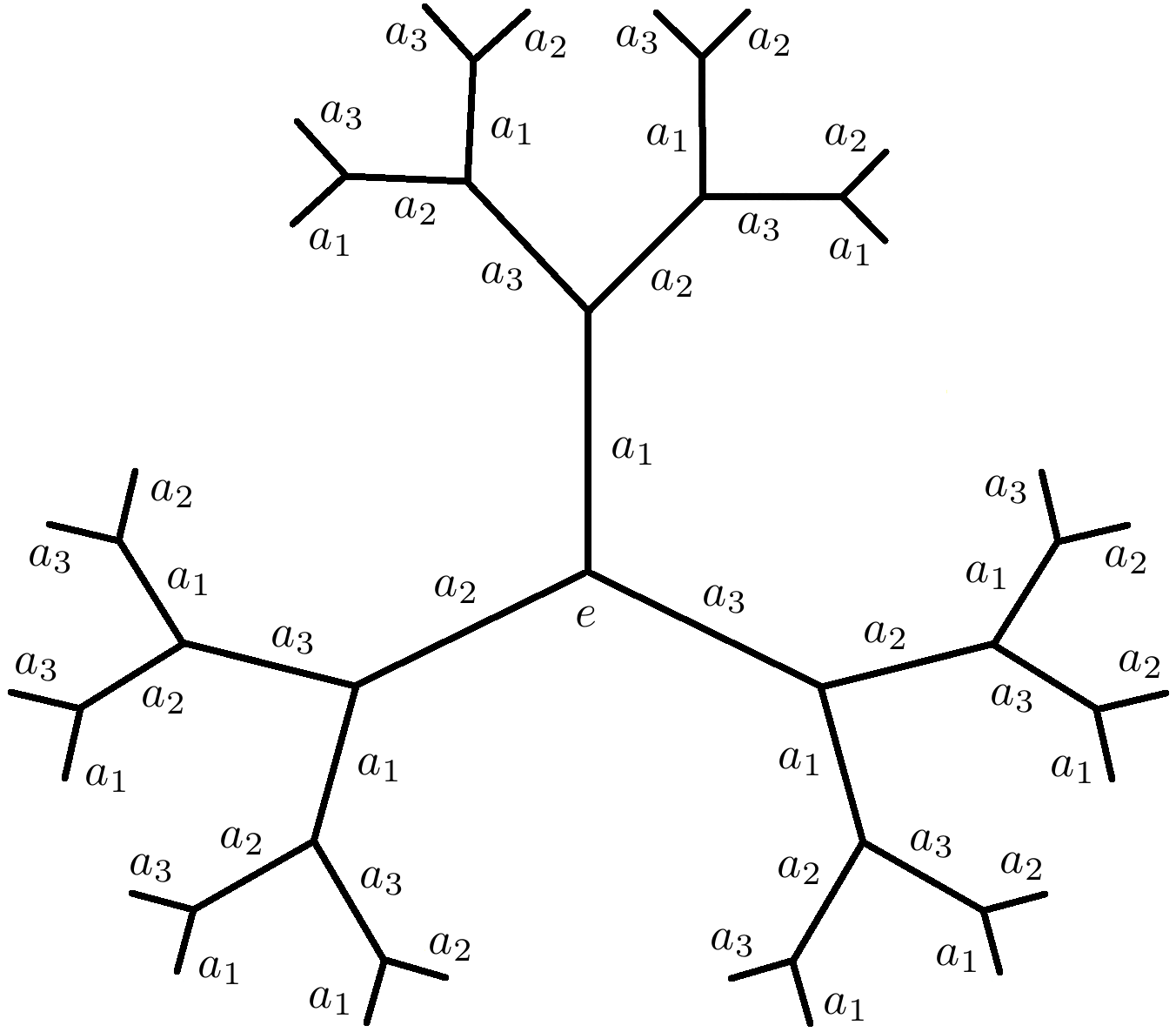}\\
\medskip
{\small Homogeneous tree $\Tau$ of degree $d=3$}
\end{center}
Using the word length $|\cdot|$ on $\Tau$, we define sets of even/odd elements of
$\Tau$
$$
\Tau_{\sf e}:=\{x\in\Tau\mid|x|\in2\N\}
\quad\hbox{and}\quad
\Tau_{\sf o}:=\{x\in\Tau\mid|x|\in2\N+1\}
$$
with corresponding characteristic functions $\chi_{\sf e}$ and $\chi_{\sf o}$. Then we
consider a quantum walk on $\Tau$ with evolution operator $U:=SC$ in the Hilbert space
$\H:=\ell^2(\Tau,\C^d)$, where $S$ is the shift operator defined as
\begin{gather*}
S:=
\left(\begin{smallmatrix}
S_{1+1,1+2} &&&\\
& S_{2+1,2+2} && \mbox{\large$0$}\\
\mbox{\large$0$} && \ddots &\\
&&& S_{d+1,d+2}\\
\end{smallmatrix}\right),
\quad S_{d,d+1}:=S_{d,1},~S_{d+1,d+2}:=S_{1,2},\\
S_{i,j}f:=\chi_{\sf e}f(\;\!\cdot\;\!a_i)+\chi_{\sf o}f(\;\!\cdot\;\!a_j),
\quad i,j\in\{1,\dots,d\},~f\in\ell^2(\Tau),
\end{gather*}
and $C$ the coin operator defined as
$$
(C\varphi)(x):=C(x)\varphi(x),\quad\varphi\in\H,~x\in\Tau,~C(x)\in\U(d).
$$

Assume that $C$ is anisotropic, namely, converging with short-range rate to a diagonal
asymptotic coin on each main branch of $\Tau$:

\begin{Assumption}
For $i=1,\dots,d$, there exist a diagonal matrix $C_i\in\U(d)$ and $\varepsilon_i>0$
such that
$$
\|C(x)-C_i\|_{\B(\C^d)}\le{\rm Const.}\;\!\langle x\rangle^{-(1+\varepsilon_i)},
\quad x\in\Tau_i,
$$
where $\Tau_i:=\{x\in\Tau\mid|a_ix|=|x|-1\}$.
\end{Assumption}

Under this assumption, it is shown in \cite[Sec.~5]{Tie_2021} that the spectrum of $U$
covers the whole unit circle and is purely absolutely continuous, outside possibly a
finite set where $U$ may have eigenvalues of finite multiplicity. Furthermore, there
exist a self-adjoint operator $A$ in $\H$, a unitary operator $U_0$ in an auxiliary
Hilbert space $\H_0$, a self-adjoint operator $A_0$ in $\H_0$ and $J\in\B(\H_0,\H)$
such that $U\in C^1(A)$, $U_0\in C^\infty(A_0)$ and
$$
JJ^*=1_\H,
\quad[A_0,U_0]U_0^{-1}=2,
\quad[A,U]U^{-1}-J[A_0,U_0]U_0^{-1}J^*\in\K(\H).
$$
It then follows from Lemma \ref{lemma_prop_U}(b) that
$$
D=\slim_{n\to\infty}D_n
=\slim_{n\to\infty}\tfrac1n\sum_{m=0}^{n-1}
U^m(J[A_0,U_0]U_0^{-1}J^*)U^{-m}P_{\rm c}(U)
=2P_{\rm c}(U).
$$
Thus, Theorem \ref{thm_decay}(a) applies for the net
$(U_j)_{j\in J}=(U^n)_{n\in\N^*}$, the set $(\ell_j)_{j\in J}=(n)_{n\in\N^*}$, the
operator $A$, and $D=2P_{\rm c}(U)$. So for each
$\varphi=D\widetilde\varphi\in D\dom(A)$ and $\psi\in\dom(A)$ there exists a constant
$c_{\varphi,\psi}\ge0$ such that
\begin{equation}\label{eq_decay_tree}
\big|\langle\varphi,U^n\psi\rangle_\H\big|
\le\|(D-D_n)\widetilde\varphi\|_\H\;\!\|\psi\|_\H+\tfrac1n\;\!c_{\varphi,\psi},
\quad n\in\N^*.
\end{equation}
As in the example of the previous section, one cannot easily improve the decay
estimate \eqref{eq_decay_tree}.

\subsection{Skew products}\label{sec_skew}

This example is motivated by \cite{Tie_2018}, but see also
\cite{CT_2016,Tie_2015,Tie_2015_2,Tie_2017}. Let $X$ be a smooth compact second
countable Hausdorff manifold with Borel probability measure $\mu_X$, and let
$(F_t)_{t\in\R}$ be a $C^1$ measure-preserving flow on $X$. The operators
$V_t:\ltwo(X,\mu_X)\to\ltwo(X,\mu_X)$ given by $V_t\varphi:=\varphi\circ F_t$ define a
strongly continuous one-parameter unitary group with self-adjoint generator $H$ in
$\ltwo(X,\mu_X)$ essentially self-adjoint on $C^1(X)$ and given by
$$
H\varphi:=i\L_Y\varphi,\quad\varphi\in C^1(X),
$$
with $Y$ the $C^0$ vector field associated to $F$ and $\L_Y$ the corresponding Lie
derivative.

Let $G$ be a compact Lie group with identity $e_G$, normalised Haar measure $\mu_G$,
and Lie algebra $\g$. Then any measurable function $\phi:X\to G$ induces a measurable
cocycle $X\times\Z\ni(x,n)\mapsto\phi^{(n)}(x)\in G$ over $F_1$ given by
$$
\phi^{(n)}(x):=
\begin{cases}
\phi(x)(\phi\circ F_1)(x)\cdots(\phi\circ F_{n-1})(x) & \hbox{if $n\ge1$}\\
\hfil e_G & \hbox{if $n=0$}\\
\hfil(\phi^{(-n)}\circ F_n)(x)^{-1} & \hbox{if $n\le-1$.}
\end{cases}
$$
The skew product $T_\phi$ defined by
$$
T_\phi:X\times G\to X\times G,~~(x,g)\mapsto\big(F_1(x),g\;\!\phi(x)\big),
$$
is an automorphism of $(X\times G,\mu_X\otimes\mu_G)$, and the corresponding Koopman
operator
$$
U_\phi\psi
:=\psi\circ T_\phi,\quad\psi\in\H:=\ltwo(X\times G,\mu_X\otimes\mu_G),
$$
is a unitary operator in $\H$.

Let $\widehat G$ be the set of (equivalence classes of) finite-dimensional irreducible
unitary representations of $G$. Then each $\pi\in\widehat G$ is a $C^\infty$ group
homomorphism from $G$ to the unitary group $\U(d_\pi)$ of degree $d_\pi:=\dim(\pi)$,
and Peter-Weyl's theorem implies that the set of all matrix elements
$\{\pi_{jk}\}_{j,k=1}^{d_\pi}$ of all $\pi\in\widehat G$ forms an orthogonal basis of
$\ltwo(G,\mu_G)$. Accordingly, one has the orthogonal decomposition
\begin{equation}\label{eq_decompo}
\H=\bigoplus_{\pi\in\widehat G}\,\bigoplus_{j=1}^{d_\pi}\H^{(\pi)}_j,
\quad\H^{(\pi)}_j:=\bigoplus_{k=1}^{d_\pi}\ltwo(X,\mu_X)\otimes\{\pi_{jk}\},
\end{equation}
and $U_\phi$ is reduced by the decomposition \eqref{eq_decompo}, with restriction
$U_{\phi,\pi,j}:=U_\phi\big|_{\H^{(\pi)}_j}$ given by
$$
U_{\phi,\pi,j}\sum_{k=1}^{d_\pi}\varphi_k\otimes\pi_{jk}
=\sum_{k,\ell=1}^{d_\pi}(\varphi_k\circ F_1)(\pi_{\ell k}\circ\phi)\otimes\pi_{j\ell},
\quad\varphi_k\in\ltwo(X,\mu_X).
$$
Furthermore, the following holds true \cite[Sec.~3]{Tie_2018}: The operator $A$ given
by
$$
A\sum_{k=1}^{d_\pi}\varphi_k\otimes\pi_{jk}
:=\sum_{k=1}^{d_\pi}H\varphi_k\otimes\pi_{jk},
\quad\varphi_k\in C^1(X),
$$
is essentially self-adjoint in $\H^{(\pi)}_j$ (with closure denoted by the same
symbol). If $\L_Y(\pi\circ\phi)\in\linf(X,\B(\C^{d_\pi}))$, then
$U_{\phi,\pi,j}\in C^1(A)$ with
$$
[A,U_{\phi,\pi,j}]=iM_{\pi\circ\phi}\;\!U_{\phi,\pi,j}
$$
and $M_{\pi\circ\phi}$ the bounded matrix-valued multiplication operator in
$\H^{(\pi)}_j$ given by
$$
M_{\pi\circ\phi}\sum_{k=1}^{d_\pi}\varphi_k\otimes\pi_{jk}
:=\sum_{k,\ell=1}^{d_\pi}\big(\L_Y(\pi\circ\phi)\cdot(\pi\circ\phi)^{-1}\big)_{k\ell}
\;\!\varphi_\ell\otimes\pi_{jk},\quad\varphi_k\in\ltwo(X,\mu_X).
$$
Finally, if $\L_Y\phi$ exists $\mu_X$-almost everywhere and $M_\phi\in\ltwo(X,\g)$,
then
$$
D=\slim_{n\to\infty}D_n=i(\d\pi)_{e_G}\big((P_\phi M_\phi)(\cdot)\big)
$$
where $(\d\pi)_{e_G}\big((P_\phi M_\phi)(\cdot)\big)$ is a bounded operator in
$\H^{(\pi)}_j$ that can be interpreted as a matrix-valued degree of the cocycle
$\pi\circ\phi:X\to\pi(G)$ (see \cite[Rem.~3.12]{Tie_2018} for more details). Thus,
Theorem \ref{thm_decay}(a) applies for the net
$(U_j)_{j\in J}=(U_{\phi,\pi,j}^n)_{n\in\N^*}$, the set
$(\ell_j)_{j\in J}=(n)_{n\in\N^*}$, the operator $A$, and
$D=i(\d\pi)_{e_G}((P_\phi M_\phi)(\cdot))$. So for each
$\varphi=D\widetilde\varphi\in D\dom(A)$ and $\psi\in\dom(A)$ there exists a constant
$c_{\varphi,\psi}\ge0$ such that
\begin{equation}\label{eq_decay_skew}
\big|\big\langle\varphi,U_{\phi,\pi,j}^n\psi\big\rangle_{\H^{(\pi)}_j}\big|
\le\|(D-D_n)\widetilde\varphi\|_{\H^{(\pi)}_j}\|\psi\|_{\H^{(\pi)}_j}
+\tfrac1n\;\!c_{\varphi,\psi},\quad n\in\N^*.
\end{equation}

In this example, one cannot easily improve the decay estimate \eqref{eq_decay_skew}.
Indeed, the convergence $D_n\stackrel{\rm s}\to D$ follows from
\cite[Lemma~3.3]{Tie_2018}, whose proof relies on Birkhoff's pointwise ergodic theorem
for Banach-valued functions. And without additional information, one cannot infer an
explicit rate of convergence from that theorem. That being said, one can exhibit
various examples where \eqref{eq_decay_skew} is satisfied, and one can even prove that
$U_\phi$ has purely a.c. spectrum in appropriate subspaces of $\H$. We refer to
\cite[Sec.~4]{Tie_2018} for more details.

\appendix

\section{Commutators and regularity classes}\label{sec_comm}
\setcounter{equation}{0}
\renewcommand{\theequation}{A.\arabic{equation}}

In this appendix, we recall the definitions of commutators of operators and regularity
classes associated with them that we use in this work. We refer to Chapters 5-6 of the
monograph \cite{ABG_1996} for more details.

Let $A$ be a self-adjoint operator in a Hilbert space $\H$ with domain $\dom(A)$, and
take a bounded operator $S\in\B(\H)$. For any $k\in\N$, we say that $S$ belongs to
$C^k(A)$, with notation $S\in C^k(A)$, if the map
\begin{equation}\label{eq_Heisen}
\R\ni t\mapsto\e^{-itA}S\e^{itA}\in\B(\H)
\end{equation}
is strongly of class $C^k$. The sets $C^k(A)\subset\B(\H)$ satisfy the inclusions
$$
C^\infty(A):=\cap_{k\in\N}C^k(A)\subset\cdots\subset C^2(A)\subset C^1(A)
\subset C^0(A)=\B(\H).
$$
In the case $k=1$, one has $S\in C^1(A)$ if and only if the quadratic form
$$
\dom(A)\ni\varphi\mapsto\langle\varphi,iSA\varphi\rangle_\H
-\langle A\varphi,iS\varphi\rangle_\H\in\C
$$
is continuous for the topology induced by $\H$ on $\dom(A)$. We denote by $[iS,A]$ the
bounded operator associated with the continuous extension of this form, or
equivalently the strong derivative of the map \eqref{eq_Heisen} at $t=0$.

If $H$ is a self-adjoint operator in $\H$ with domain $\dom(H)$ and spectrum
$\sigma(H)$, we say that $H$ is of class $C^k(A)$ if $(H-z)^{-1}\in C^k(A)$ for some
$z\in\C\setminus\sigma(H)$. In particular, $H$ is of class $C^1(A)$ if and only if the
quadratic form
$$
\dom(A)\ni\varphi\mapsto\langle\varphi,(H-z)^{-1}A\varphi\rangle_\H
-\langle A\varphi,(H-z)^{-1}\varphi\rangle_\H\in\C
$$
extends continuously to a bounded form with corresponding bounded operator denoted by
$[(H-z)^{-1},A]\in\B(\H)$. In such a case, the set $\dom(H)\cap\dom(A)$ is a core for
$H$ and the quadratic form
$$
\dom(H)\cap\dom(A)\ni\varphi\mapsto\langle H\varphi,A\varphi\rangle
-\langle A\varphi,H\varphi\rangle\in\C
$$
is continuous in the graph norm topology of $\dom(H)$ \cite[Thm.~6.2.10(a)]{ABG_1996}.
This form then extends uniquely to a continuous quadratic form on $\dom(H)$ which can
be identified with a continuous operator $[H,A]$ from $\dom(H)$ to the adjoint space
$\dom(H)^*$. In addition, the following relation holds in $\B(\H)$
\cite[Thm.~6.2.10(b)]{ABG_1996}:
$$
[(H-z)^{-1},A]=-(H-z)^{-1}[H,A](H-z)^{-1}.
$$

\section{RAGE-type theorem for unitary operators}\label{sec_RAGE}
\setcounter{equation}{0}
\renewcommand{\theequation}{B.\arabic{equation}}

In this appendix, we give the proof of a RAGE-type theorem for unitary operators that
we use in Section \ref{sec_unit}. The theorem is surely well-known, but since we did
not find it in this form in the literature we present its proof for completeness.

We start by recalling the usual RAGE theorem for a unitary operator $U$. As in the
previous sections, we use the notation $P_{\rm p}(U)$ for the projection onto the pure
point subspace $\H_{\rm p}(U)$ of $U$ and $P_{\rm c}(U)$ for the projection onto the
continuous subspace $\H_{\rm c}(U)$ of $U$.

\begin{Theorem}[RAGE theorem, page 320 of \cite{Simon_2015}]\label{thm_RAGE}
Let $U$ be a unitary operator in a Hilbert space $\H$ and $K\in\K(\H)$. Then
$$
\lim_{n\to\infty}\tfrac1n\sum_{m=0}^{n-1}\|KU^{-m}\varphi\|_\H^2
=\|KP_{\rm p}(U)\varphi\|_\H^2\quad\hbox{for all $\varphi\in\H$.}
$$
\end{Theorem}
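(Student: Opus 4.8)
The plan is to establish the RAGE theorem through the standard two-step argument: first prove it for finite-rank operators, then pass to general compact operators by approximation. Since every compact operator is a norm limit of finite-rank operators and the Cesaro averages involved are uniformly bounded, the general case will follow from the finite-rank case by a routine $\varepsilon/3$-type estimate. Thus the real content lies in the finite-rank step.

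For the finite-rank case, I would reduce to rank-one operators of the form $K=\langle\eta,\cdot\,\rangle_\H\;\!\xi$ by linearity, and in fact it suffices to understand $\|KU^{-m}\varphi\|_\H^2=|\langle\eta,U^{-m}\varphi\rangle_\H|^2\,\|\xi\|_\H^2$. So the crux is to compute
$$
\lim_{n\to\infty}\tfrac1n\sum_{m=0}^{n-1}|\langle\eta,U^{-m}\varphi\rangle_\H|^2.
$$
First I would write the inner product via the spectral measure of $U$: setting $\nu:=\langle\eta,E^U(\cdot)\varphi\rangle_\H$, a complex measure on $\S^1$, one has $\langle\eta,U^{-m}\varphi\rangle_\H=\int_{\S^1}\overline{\theta}^{\,m}\,\d\nu(\theta)$, so that the summand is a product of two such integrals, giving a double integral against $\d\nu(\theta)\,\d\overline{\nu}(\theta')$ of the kernel $\tfrac1n\sum_{m=0}^{n-1}(\overline\theta\theta')^m$. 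By von Neumann's mean ergodic theorem (already invoked in the proofs of Lemmas \ref{lemma_prop_H} and \ref{lemma_prop_U}), this averaged kernel is bounded by $1$ and converges pointwise to $\chi_{\{\theta=\theta'\}}$; dominated convergence then collapses the double integral to the diagonal, yielding $\sum_{\theta\in\S^1}|\nu(\{\theta\})|^2$, where the sum runs over the (countably many) atoms of $\nu$.

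The final identification is to recognize that the atoms of $\nu$ are exactly the eigenvalues of $U$ and that $\nu(\{\theta\})=\langle\eta,E^U(\{\theta\})\varphi\rangle_\H$, whence
$$
\sum_{\theta}|\nu(\{\theta\})|^2
=\sum_{\theta}|\langle\eta,E^U(\{\theta\})\varphi\rangle_\H|^2\,\|\xi\|_\H^2
=\|KP_{\rm p}(U)\varphi\|_\H^2,
$$
using that $P_{\rm p}(U)=\sum_\theta E^U(\{\theta\})$ is the sum of the eigenprojections and that the ranges of distinct $E^U(\{\theta\})$ are orthogonal. Extending from rank one to finite rank is then immediate by the parallelogram/polarization bookkeeping for the cross terms, which also average to the pure-point diagonal contribution.

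The step I expect to be the main obstacle is justifying the interchange of the limit with the double integral over $\S^1\times\S^1$, since the averaged kernel $\tfrac1n\sum_{m=0}^{n-1}(\overline\theta\theta')^m$ does not converge uniformly (its pointwise limit is the discontinuous indicator of the diagonal). The clean way around this is to apply Fubini together with the dominated convergence theorem against the finite product measure $|\nu|\otimes|\nu|$, using the uniform bound $1$ on the kernel; this avoids any delicate equicontinuity argument. Alternatively, one can sidestep the spectral-theoretic computation entirely by quoting von Neumann's theorem to identify $\slim_{n\to\infty}\tfrac1n\sum_{m=0}^{n-1}U^{-m}(\cdot)U^m$ on Hilbert-Schmidt operators, but the measure-theoretic route is the most transparent and self-contained given the tools already assembled in the paper.
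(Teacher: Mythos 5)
Your overall strategy --- reduce to finite rank by an $\varepsilon/3$ approximation, then to rank one, express $\langle\eta,U^{-m}\varphi\rangle_\H$ through the complex spectral measure $\nu:=\langle\eta,E^U(\cdot)\varphi\rangle_\H$, and evaluate the Cesaro average by dominated convergence for the kernel $\tfrac1n\sum_{m=0}^{n-1}(\overline\theta\theta')^m$ on $\S^1\times\S^1$ --- is sound, and it is the standard Wiener-lemma route. The paper itself gives no proof of Theorem \ref{thm_RAGE} (it is quoted from \cite{Simon_2015}), so there is nothing internal to compare against; the approximation step and the double-integral computation are fine.

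The genuine error is in your final identification. Your computation correctly yields
$$
\lim_{n\to\infty}\tfrac1n\sum_{m=0}^{n-1}\|KU^{-m}\varphi\|_\H^2
=\sum_{\theta}\big|\langle\eta,E^U(\{\theta\})\varphi\rangle_\H\big|^2\,\|\xi\|_\H^2
=\sum_{\theta}\big\|KE^U(\{\theta\})\varphi\big\|_\H^2,
$$
the sum running over the eigenvalues of $U$, but this is \emph{not} equal to $\|KP_{\rm p}(U)\varphi\|_\H^2=\big\|\sum_{\theta}KE^U(\{\theta\})\varphi\big\|_\H^2$ in general. The orthogonality of the ranges of the $E^U(\{\theta\})$ does not survive the application of $K$: for your rank-one $K$ every vector $KE^U(\{\theta\})\varphi$ is a multiple of $\xi$, so the cross terms in $\big\|\sum_{\theta}KE^U(\{\theta\})\varphi\big\|_\H^2$ have no reason to cancel, whereas your diagonal sum discards them. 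Concretely, take $\H=\C^2$, $U=\mathrm{diag}(1,-1)$, $K$ the orthogonal projection onto $\xi:=(e_1+e_2)/\sqrt2$, and $\varphi:=(e_1-e_2)/\sqrt2$; then $KP_{\rm p}(U)\varphi=K\varphi=0$, while $\|KU^{-m}\varphi\|_\H^2$ alternates between $0$ and $1$, so the Cesaro limit is $\tfrac12=\sum_\theta\|KE^U(\{\theta\})\varphi\|_\H^2$. In other words, what you have actually proved is the correct form of the RAGE theorem with $\sum_\theta\|KE^U(\{\theta\})\varphi\|_\H^2$ on the right-hand side; the equality as displayed in the statement holds only when the cross terms vanish, e.g. for $\varphi\in\H_{\rm c}(U)$ (where both sides are zero --- which is the only case used in the proof of Theorem \ref{thm_RAGE-type}, so nothing downstream in the paper is affected). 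As a proof of the displayed identity for all $\varphi\in\H$, however, your argument does not close: the last equality sign is false, and no correct argument can repair it because the identity itself fails.
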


\begin{Theorem}\label{thm_RAGE-type}
Let $U$ be a unitary operator in a Hilbert space $\H$ and $K\in\K(\H)$. Then
$$
\slim_{n\to\infty}\tfrac1n\sum_{m=0}^{n-1}U^mKU^{-m}
=\sum_{\theta\in\scriptsize\{\hbox{eigenvalues of $U$}\}}
E^U(\{\theta\})KE^U(\{\theta\}).
$$
\end{Theorem}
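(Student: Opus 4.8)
The plan is to split the problem along the orthogonal decomposition $\H=\H_{\rm p}(U)\oplus\H_{\rm c}(U)$ and to treat the two pieces by different tools. Writing $S:=\sum_{\theta}E^U(\{\theta\})KE^U(\{\theta\})$ for the operator on the right-hand side, I would first record the elementary facts that the summands have mutually orthogonal ranges, so that the series converges strongly, $S\in\B(\H)$ with $\|S\|\le\|K\|$, and $S=SP_{\rm p}(U)=P_{\rm p}(U)S$; likewise each Cesàro mean $T_n:=\tfrac1n\sum_{m=0}^{n-1}U^mKU^{-m}$ obeys $\|T_n\|\le\|K\|$. Because of these uniform bounds it suffices to prove $T_n\varphi\to S\varphi$ separately for $\varphi\in\H_{\rm c}(U)$, where the limit must be $0$, and for $\varphi\in\H_{\rm p}(U)$.

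For the continuous part I would fix $\varphi\in\H_{\rm c}(U)$ and use $U^{-m}\varphi\in\H_{\rm c}(U)$ together with the triangle and Cauchy--Schwarz inequalities to write
$$
\|T_n\varphi\|_\H
\le\tfrac1n\sum_{m=0}^{n-1}\|KU^{-m}\varphi\|_\H
\le\Big(\tfrac1n\sum_{m=0}^{n-1}\|KU^{-m}\varphi\|_\H^2\Big)^{1/2}.
$$
By the RAGE theorem (Theorem \ref{thm_RAGE}) the quantity under the square root tends to $\|KP_{\rm p}(U)\varphi\|_\H^2=0$, since $P_{\rm p}(U)\varphi=0$; hence $T_n\varphi\to0=S\varphi$, the last equality because $S=SP_{\rm p}(U)$. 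This is the only place where compactness of $K$ enters, and I expect it to be the crux of the argument.

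For the point part, the uniform bounds $\|T_n\|\le\|K\|$ and $\|S\|\le\|K\|$ let me reduce, by a routine $3\varepsilon$-approximation on the dense set of finite linear combinations of eigenvectors, to a single eigenvector $e$ with $Ue=\theta_0e$ and $\theta_0\in\S^1$. For such $e$ one has $U^{-m}e=\theta_0^{-m}e$, so that
$$
T_ne=\Big(\tfrac1n\sum_{m=0}^{n-1}(\theta_0^{-1}U)^m\Big)Ke,
$$
and von Neumann's mean ergodic theorem applied to the unitary $\theta_0^{-1}U$ yields the strong limit $E^U(\{\theta_0\})Ke$ (here $\theta_0^{-1}U-1$ has the same kernel as $U-\theta_0$). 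Since $E^U(\{\theta\})e=0$ for every eigenvalue $\theta\ne\theta_0$, one also has $Se=E^U(\{\theta_0\})Ke$, so $T_ne\to Se$. The approximation argument then extends this to all $\varphi\in\H_{\rm p}(U)$, and combining the two parts with $S=SP_{\rm p}(U)$ gives $T_n\varphi\to S\varphi$ for every $\varphi\in\H$, as required.
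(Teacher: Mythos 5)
Your proposal is correct and follows essentially the same route as the paper: the same splitting $\H=\H_{\rm p}(U)\oplus\H_{\rm c}(U)$, the same Cauchy--Schwarz plus RAGE argument on the continuous part, and von Neumann's mean ergodic theorem applied to $\theta_0^{-1}U$ on the point part. The only cosmetic difference is that you pass from single eigenvectors to general $\varphi\in\H_{\rm p}(U)$ by a density argument using the uniform bounds $\|T_n\|\le\|K\|$, whereas the paper interchanges the strong limit with the eigenbasis expansion directly, justified by the same uniform boundedness.
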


\begin{proof}
We mimic the proof of the analogous theorem in the self-adjoint case
\cite[Thm.~5.9]{Tes_2014}. Any $\varphi \in \H$ admits an orthogonal decomposition
$\varphi=\varphi_{\rm p}+\varphi_{\rm c}$ with $\varphi_{\rm p}\in\H_{\rm p}(U)$ and
$\varphi_{\rm c}\in\H_{\rm c}(U)$. For the component $\varphi_{\rm c}$, we get from
the Cauchy-Schwarz inequality and Theorem \ref{thm_RAGE} that
$$
\lim_{n\to\infty}\left\|\tfrac1n\sum_{m=0}^{n-1}U^mKU^{-m}\varphi_{\rm c}\right\|_\H
\le\lim_{n\to\infty}\left(\sum_{m=0}^{n-1}\tfrac1n\right)^{1/2}
\left(\tfrac1n\sum_{m=0}^{n-1}\big\|KU^{-m}\varphi_{\rm c}\big\|_\H^2\right)^{1/2}
=0.
$$
For the component $\varphi_{\rm p}$, we write
$\varphi_{\rm p}=\sum_{j\ge1}\alpha_j\varphi_j$ with $(\varphi_j)_{j\ge1}$ an
orthonormal basis of $\H_{\rm p}(U)$, $\alpha_j\in\C$, and
$U\varphi_j=\theta_j\varphi_j$ for some $\theta_j\in\S^1$. Then we get
\begin{equation}\label{eq_exchange}
\slim_{n\to\infty}\tfrac1n\sum_{m=0}^{n-1}U^mKU^{-m}\varphi_{\rm p}
=\slim_{n\to\infty}\sum_{j\ge1}\alpha_j
\left(\tfrac1n\sum_{m=0}^{n-1}(U\theta_j^{-1})^m\right)K\varphi_j.
\end{equation}
Now we have
$
\|\tfrac1n\sum_{m=0}^{n-1}(U\theta_j^{-1})^m\|_{\B(\H)}\le1
$
for all $n\in\N^*$, and
$$
\slim_{n\to\infty}\tfrac1n\sum_{m=0}^{n-1}(U\theta_j^{-1})^m=E^U(\{\theta_j\})
$$
due to von Neumann's mean ergodic theorem. Therefore we can exchange the limit and the
sum in \eqref{eq_exchange} to obtain
$$
\slim_{n\to\infty}\tfrac1n\sum_{m=0}^{n-1}U^mKU^{-m}\varphi_{\rm p}
=\sum_{j\ge1} \alpha_j E^U(\{\theta_j\}) K\varphi_j
=\sum_{\theta\in\scriptsize\{\hbox{eigenvalues of $U$}\}}
E^U(\{\theta\})KE^U(\{\theta\})\varphi,
$$
as desired.
\end{proof}


\def\cprime{$'$} \def\polhk#1{\setbox0=\hbox{#1}{\ooalign{\hidewidth
  \lower1.5ex\hbox{`}\hidewidth\crcr\unhbox0}}}
  \def\polhk#1{\setbox0=\hbox{#1}{\ooalign{\hidewidth
  \lower1.5ex\hbox{`}\hidewidth\crcr\unhbox0}}}
  \def\polhk#1{\setbox0=\hbox{#1}{\ooalign{\hidewidth
  \lower1.5ex\hbox{`}\hidewidth\crcr\unhbox0}}} \def\cprime{$'$}
  \def\cprime{$'$} \def\polhk#1{\setbox0=\hbox{#1}{\ooalign{\hidewidth
  \lower1.5ex\hbox{`}\hidewidth\crcr\unhbox0}}}
  \def\polhk#1{\setbox0=\hbox{#1}{\ooalign{\hidewidth
  \lower1.5ex\hbox{`}\hidewidth\crcr\unhbox0}}} \def\cprime{$'$}
  \def\cprime{$'$} \def\cprime{$'$}


\end{document}